\definecolor{db}{RGB}{66, 126, 147}
\newtheorem{theorem}{Theorem}
\newtheorem{corollary}{Corollary}
\newtheorem{lemma}{Lemma}
\newtheorem{definition}{Definition}
\newtheorem*{question*}{Question}
\newcommand{\zz}{\mathbb Z}
\newcommand{\R}{\mathbb R}
\newcommand{\C}{\mathbb C}
\newcommand{\tr}{\mathrm{tr}}
\newcommand{\one}{\mathbbm{1}}
\newcommand{\Sym}{\mathrm{Sym}}
\newcommand\mc[1]{\mathcal{#1}}
\newcommand{\spec}{\mathrm{sp}}
\newcommand{\im}{\mathrm{im}}
\newcommand{\sa}{\mathrm{sa}}
\newcommand{\fl}{\mathfrak{l}}
\newcommand{\cB}{{\mc{B}}}
\newcommand{\cO}{{\mc{O}}}
\newcommand{\cP}{\mc{P}}
\newcommand{\cS}{\mc{S}}
\newcommand{\cC}{{\mc{C}}}
\newcommand{\CH}{{\cC(\cH)}}
\newcommand{\CHm}{{\cC_\mathrm{max}(\cH)}}
\newcommand{\CSpin}{{\cC(\cS)}}
\newcommand{\CSm}{{\cC_\mathrm{max}(\cS)}}
\newcommand{\PS}{{\cP(\cS)}}
\newcommand{\PC}{{\cP(C)}}
\newcommand{\cH}{\mc{H}}
\newcommand{\LH}{\mc{L}(\cH)}
\newcommand{\LHsa}{\LH_\sa}
\newcommand{\SH}{\mc{S}(\cH)}
\newcommand{\PH}{{\mc{P}(\cH)}}
\newcommand{\tC}{{\widetilde{C}}}
\newcommand\tO{{\widetilde{O}}}
\newcommand{\tp}{{\widetilde{p}}}
\newcommand\id{{\mathrm{id}}}
\newcommand\ra{\rightarrow}
\newcommand\Llra{\Longleftrightarrow}
\begin{document}
\title{An algebraic characterisation of Kochen-Specker contextuality}
\author{Markus Frembs\\\small{Institut f\"ur Theoretische Physik, Leibniz Universit\"at Hannover,}\\[-0.2cm]\small{Appelstraße 2, 30167 Hannover, Germany}\\\small{Centre for Quantum Dynamics, Griffith University,}\\[-0.2cm]\small{Yugambeh Country, Gold Coast, QLD 4222, Australia}}
\date{\large\today}
\maketitle

\begin{abstract}
    Contextuality is a key distinguishing feature between classical and quantum physics. It expresses a fundamental obstruction to describing quantum theory using classical concepts. In turn, understood as a resource for quantum computation, it is expected to hold the key to quantum advantage. Yet, despite its long recognised importance in quantum foundations and, more recently, in quantum computation, the structural essence of contextuality has remained somewhat elusive - different frameworks address different aspects of the phenomenon, yet their precise relationship often remains unclear. This issue already looms large at the level of the Bell-Kochen-Specker theorem: while traditional proofs proceed by showing the nonexistence of valuations, the notion of state-independent contextuality in the marginal approach allows to prove contextuality from seemingly weaker assumptions. In the light of this, and at the absence of a unified mathematical framework for Kochen-Specker contextuality, the original algebraic approach has been widely abandoned, in favour of the study of contextual correlations.
    
    Here, we reinstate the algebraic perspective on contextuality. Concretely, by building on the novel concept of \emph{context connections}, we reformulate the algebraic relations between observables originally postulated by Kochen and Specker, and we explicitly demonstrate their consistency with the notion of state-independent contextuality. In the present paper, we focus on the new conceptual ideas and discuss them in the concrete setting of spin-$1$ observables, specifically those in the example of Ref.~\cite{YuOh2012}; in a companion paper \cite{Frembs2024b}, we generalise these ideas, obtain a complete characterisation of Kochen-Specker contextuality and provide a detailed comparison with the related notions of contextuality in the marginal and graph-theoretic approach.
\end{abstract}

\section{Introduction}\label{sec: motivation}

The Bell-Kochen-Specker (BKS) theorem \cite{Bell1966,KochenSpecker1967} demonstrates that quantum theory does not admit a classical description, unless such a description is both nonlocal and contextual. More precisely, there exists no embedding $\epsilon:\LHsa\ra L_\infty(\Lambda)$ of quantum observables (represented by self-adjoint operators $\LHsa$ on a Hilbert space $\cH$) into the space of (measurable) functions on some classical state space $\Lambda$ (`hidden variables'), such that algebraic relations between commuting observables are preserved. Following Kochen and Specker's original argument, various proofs of the BKS theorem construct so-called \emph{Kochen-Specker sets}, that is, sets of rank-$1$ projections which do not admit a valuation (for details, see Sec.~\ref{sec: BKS theorem}). 

An alternative proof strategy is to show the violation of \emph{noncontextuality inequalities} by quantum mechanics (for details, see Sec.~\ref{sec: noncontextuality inequalities}). As it turns out, there exist noncontextuality inequalities that are violated by every quantum state, yet which do not constitute Kochen-Specker sets \cite{YuOh2012}.\footnote{Note also that this is unlike the case of Bell inequality violations, which are necessarily state-dependent.} This came as a surprise, as it indicated a potential discrepancy between the original characterisation of contextuality by Kochen and Specker on the one side, and the modern analysis of noncontextual correlations on the other \cite{BudroniEtAl2022}. In any case, it expresses the fact that the full extent of the constraints on classical embeddings $\epsilon$ remains poorly understood. Given the importance of contextuality in quantum foundations, and as a resource in quantum computation, this is an unsatisfactory state of affairs. The aim of this and a companion paper is to remedy this situation. The present paper focuses on the necessary new concepts and shows how they resolve the apparent discrepancy with the notion of state-independent contextuality in the marginal approach; in a companion paper \cite{Frembs2024b}, we lift these to a complete algebraic characterisation of Kochen-Specker noncontextuality, and provide a full comparison with various related notions of contextuality in the marginal approach.

The present paper is structured as follows. In Sec.~\ref{sec: two views}, we provide some minimal technical background (Sec.~\ref{sec: contexts in order}) in order to review both the original ``algebraic approach" to contextuality, as initiated by Kochen and Specker (Sec.~\ref{sec: KS contextuality}), the Bell-Kochen-Specker theorem (Sec.~\ref{sec: BKS theorem}) and the ``marginal approach" which studies noncontextual correlations (Sec.~\ref{sec: noncontextuality inequalities}). Sec.~\ref{sec: fundamental discrepancy} contrasts these two approaches and stresses that the algebraic one lacks an adequate counterpart to the characterisation of Kochen-Specker contextuality in terms of the state-independent violation of noncontextuality inequalities in the marginal approach. Sec.~\ref{sec: wielding the unwieldy} contains our new contributions. We first provide a physical intuition for the insufficiency of standard arguments in proofs of the BKS theorem based on Kochen-Specker sets in Sec.~\ref{sec: BKS theorem}. This intuition is then formalised in Sec.~\ref{sec: context connection}, where we introduce the key new concept and corresponding mathematical tool of this work: \emph{context connections}. Sec.~\ref{sec: CNC} contains our main result (Thm.~\ref{thm: CNC}), which derives new, inherently geometric constraints from Kochen-Specker noncontextuality. Moreover, in Sec.~\ref{sec: resolving the discrepancy} we show explicitly that these constraints subsume proofs of the BKS theorem based on state-independent violations of noncontextuality inequalities. Sec.~\ref{sec: discussion} concludes.

\section{Two perspectives on Kochen-Specker contextuality}\label{sec: two views}

\subsection{Quantum observables, contexts and context order}\label{sec: contexts in order}

In quantum theory, sharp (or ideal) obversables\footnote{Our restriction to sharp observables follows the original treatment by Kochen and Specker \cite{KochenSpecker1967} and that of Ref.~\cite{AbramskyBrandenburger2011,DoeringFrembs2019a,XuCabello2019}, and we adopt it here for ease of comparison. Still, we remark that the essential notion of compatibility of observables (see below) has since been extended to unsharp observables in the form of positive operator-valued measures \cite{BuschLahtiPellonpaa_QuantumMeasurement,Kunjwal2014,ChiribellaYuan2014,HeinosaariMiyaderaZiman2016}. An analysis of Kochen-Specker contextuality in terms of unsharp observables is thus possible, and we defer it to future work.} are represented by self-adjoint or Hermitian operators $\LHsa$ on some Hilbert space $\cH$.\footnote{We will use the terms (quantum) observable and Hermitian (self-adjoint) operator interchangeably.}\footnote{Kochen-Specker contextuality is not tied to quantum theory, and many frameworks in the literature do not restrict observables to be quantum \cite{KochenSpecker1967,AbramskyBrandenburger2011,CabelloSeveriniWinter2014}. Indeed, Kochen and Specker consider partial algebras - of which $\LHsa$ is a rather special case \cite{KochenSpecker1967}. Our formalism is not tied to the quantum setting either, however, for clarity of exposition, in this paper we only discuss the finite-dimensional quantum case, leaving the general case to a companion paper \cite{Frembs2024b}.} The Heisenberg uncertainty relations assert that not all quantum observables can be measured jointly (in every state) \cite{Heisenberg1927}. Mathematically, this is a consequence of the noncommutativity of quantum observables. In turn, we call two Hermitian operators $O,O'\in\LHsa$ \emph{jointly measurable} or \emph{compatible} if they commute $[O,O'] = 0$. The following alternative characterisation of commutativity will be useful. Let $O\in\LHsa$ be a self-adjoint operator with spectral decomposition $O=\sum_k\lambda^O_k p_k$ for $\lambda^O_k\in\R$ and $p_k\in\PH$ a projection. We call $\mathrm{sp}(O) = \{\lambda^O_k\}_k$ the \emph{spectrum of $O$}, and $O$ \emph{non-degenerate} if $|\mathrm{sp}(O)| = d = \dim(\cH)$; otherwise, $O$ is called \emph{degenerate}. Given any function $h:\R\ra\R$, we obtain another self-adjoint operator $h(O)\in\LHsa$ by setting $h(O) = \sum_k h(\lambda^O_k)p_k$.\footnote{Since we restrict to finite dimensions, the spectra of all observables are discrete. We therefore need not further restrict the space of functions $h:\R\ra\R$, and will leave them arbitrary.} Now, let $O',O'' \in \LHsa$, then (see e.g. Thm.~11.3 in Ref.~\cite{BuschLahtiPellonpaa_QuantumMeasurement}):
\begin{align}\label{eq: commuting observables}
    [O',O''] = 0 \quad \Longleftrightarrow \quad \exists O \in \LHsa, g',g'': \R \ra \R\ \mathrm{s.t.}\ O' = g'(O), O''=g''(O)\; .
\end{align}
We group observables into collections of jointly measurable ones, by defining a \emph{context $C\subset\LHsa$} as a commutative subalgebra of $\LHsa$.\footnote{This idea is the starting point of the topos approach \cite{IshamButterfieldI,IshamButterfieldII,IshamButterfieldIII,IshamButterfieldIV,IshamDoeringI,IshamDoeringII,IshamDoeringIII,IshamDoeringIV,HLSBohrification2009,HLS2009,HLS2010,DoeringFrembs2019a,FrembsDoering2022a}. Here, define contexts to be real subalgebras, instead of considering their complexifications as is often done.} Note that the observables $O \in C$ in a context $C\subset\LHsa$ all have a spectral decomposition with respect to the same projections $\cP(C)=\{p\in\PH\mid\forall O \in C:\ \tr[Op] \in \mathrm{sp}(O)\}$. Every commutative subalgebra $C\subset\LHsa$ is therefore generated by a single self-adjoint operator $O\in\LHsa$,
\begin{align}\label{eq: generating observables}
    C = C(O) := \langle h(O) \mid h: \R \ra \R\rangle\; .
\end{align}
Note that there are generally many observables generating the same context, that is, $C(O)=C(O')$ does not imply $O=O'$. Nevertheless, if $O$ is non-degenerate, then $C(O)=C(O')$ implies that $O'$, too, is a non-degenerate observable. Clearly, the set of contexts generated by non-degenerate observables is maximal, that is, $C(O) \subset C$ for $O$ non-degenerate and $C\subset\LHsa$ a context implies $C(O) = C$. We denote the set of maximal contexts by $\CHm$. In turn, $C\subset\LHsa$ is non-maximal if $C=C(O)$ with $O\in\LHsa$ a degenerate observable. The set of all contexts, ordered by inclusion,
\begin{align}\label{eq: context category}
    \CH := (\{C\subset\LHsa \mid C\ \mathrm{commutative}\},\subset)\; ,
\end{align}
is called the \emph{partial order of contexts} or \emph{context category $\CH$}. See Fig.~\ref{fig: context category} (and Fig.~\ref{fig: spin context category}) for a sketch of the partial order of contexts $\cC(\C^3)$ of a spin-$1$ quantum system.

\begin{figure}
    \centering
    \scalebox{1.3}{\begin{tikzpicture}[node distance=3cm, every node/.style={scale=0.67}]
    \node(Vp1)                        {$\mathbb{R}p_{-1}+\mathbb{R}(\mathds{1}-p_{-1})$};
    \node(Va2c)   [above=1.25cm of Vp1]        {};
    \node(V1bc)   [left=0.5cm of Va2c]        {};
    \node(Vp2)   [right of=Vp1]        {$\mathbb{R}p_1+\mathbb{R}(\mathds{1}-p_1)$};
    \node(Vp3)   [right of=Vp2]        {$\mathbb{R}p_0+\mathbb{R}(\mathds{1}-p_0)$};
    \node(Vab3)   [above=1.25cm of Vp3]        {};
    \node(V123)   [above=1.25cm of Vp2]        {$\mathbb{R}p_{-1}+\mathbb{R}p_0+\mathbb{R}p_1$};
    
    \node(Vp'2)   [right of=Vp3]      {$\mathbb{R}p'_{-1}+\mathbb{R}(\mathds{1}-p'_{-1})$};
    \node(Vp'1)   [right of=Vp'2]       {$\mathbb{R}p'_1+\mathbb{R}(\mathds{1}-p'_1)$};
    \node(Va2'c)   [above=1.25cm of Vp'1]        {};
    \node(V1'bc)   [right=0.5cm of Va2'c]        {};
    \node(V1'2'3)   [above= 1.25cm of Vp'2]      {$\mathbb{R}p'_{-1}+\mathbb{R}p_0+\mathbb{R}p'_1$};
    
    \node(Vone)  [below=1.25cm of Vp3]  {$\mathbb{R}\mathbbm{\mathds{1}}$};
    
    \node(3DotsUpLeft)     [left=0.3cm of V1bc]    {$\cdots$};
    \node(3DotsUpRight)     [right=0.3cm of V1'bc]    {$\cdots$};
    \node(3DotsDownLeft)   [left=0.3cm of Vp1]    {$\cdots$};
    \node(3DotsDownRight)   [right=0.3cm of Vp'1]    {$\cdots$};

    \draw [->](Vone) -- (Vp1);
    \draw [->](Vone) -- (Vp2);
    \draw [->](Vone) -- (Vp3);
    \draw [->](Vone) -- (Vp'2);
    \draw [->](Vone) -- (Vp'1);
    
    \draw [->](Vp1) -- (V123);
    \draw [->,dashed](Vp1) -- (V1bc);
    \draw [->](Vp2) -- (V123);
    \draw [->,dashed](Vp2) -- (Va2c);

    \draw [->](Vp3) -- (V123);
    \draw [->](Vp3) -- (V1'2'3);
    \draw [->,dashed](Vp3) -- (Vab3);
    
    \draw [->](Vp'2) -- (V1'2'3);
    \draw [->,dashed](Vp'2) -- (Va2'c);
    \draw [->](Vp'1) -- (V1'2'3);
    \draw [->,dashed](Vp'1) -- (V1'bc);
\end{tikzpicture}}
    \caption{Sketch of the partial order of contexts $\CH$ of a spin-$1$ system (for $\dim(\cH) = 3$). The least element of $\CH$ is generated by the identity $C(\mathbbm{1})=\R\mathbbm{1}$. Every maximal context $C\in\CHm$ is generated by three orthogonal rank-$1$ projections, e.g. the projections onto the eigenspaces of a spin-$1$ observable $S=\sum_{s=0,\pm 1} sp_s$ define the context $C(S)=\R p_{-1}+\R p_0 + \R p_1$. Finally, every non-maximal context $C(\mathbbm{1})\neq C\in\CH$ is generated by a single rank-$1$ projection and its complement, e.g. $C(S^2)=\R p_0 + \R(\mathbbm{1}-p_0)$.}
    \label{fig: context category}
\end{figure}

\subsection{Kochen-Specker (non)contextuality}\label{sec: KS contextuality}

A central question in the foundations of quantum mechanics is whether quantum theory is ``incomplete" and will eventually give way to a more fundamental description in terms of classical ``hidden variables" \cite{EPR1935,Marage1999}. If the counterintuitive predictions of quantum mechanics could be rooted in a classical theory, it would help resolve at least some of the issues regarding the interpretation of quantum theory. In turn, and from a modern perspective, the promise of quantum computation relies on the assumption that quantum mechanics escapes a classical description, and thus also classical limitations on computational power. The study of contextuality as a resource in quantum computing is an active and growing field of research \cite{Raussendorf2013,HowardEtAl2014,FrembsRobertsBartlett2018,FrembsRobertsCampbellBartlett2023}.\\

\textbf{Classical embeddings.} For $\LHsa$ to admit a description in terms of a classical state space (``hidden variables") $\Lambda$, requires a \emph{(classical) embedding $\epsilon: \LHsa \ra L_\infty(\Lambda)$} from the space of quantum observables $\LHsa$ into the algebra of measurable functions on $\Lambda$.\footnote{From the perspective of quantum computation, the existence of a classical embedding $\epsilon:\cO\ra L_\infty(\Lambda)$ for a subset of operators $\cO\subset\LHsa$ is of interest since it suggests that the respective quantum sub-theory can be efficiently classically simulated; in contrast, the nonexistence of such embeddings indicates that the sub-theory is sufficient to achieve ``quantum advantage".} Usually, the state or phase space of a classical theory is a symplectic manifold, hence, equipped with both topological and geometric structure \cite{Arnold2013}. However, for our purposes it will be sufficient that $\Lambda$ can represent classical observables in the form of measurable functions, hence, we will require $\Lambda$ to be a measurable space only.\footnote{Recall that a measurable space $(\Lambda,\sigma)$ is a set $\Lambda$ equipped with a $\sigma$-algebra of measurable subsets of $\Lambda$, which is closed under complements, countable unions and countable intersections. A measure space further specifies a measure $\mu$ on $(\Lambda,\sigma)$, satisfying $\mu(S) \geq 0$ for all $S \in \sigma$ and $\mu(\bigcup_{k=1}^\infty S_k) = \sum_{k=1}^\infty \mu(S_k)$ for countable collections $\{S_k\}_{k=1}^\infty$ of pairwise disjoint sets. If $\Lambda$ is equipped with a topology, then a canonical choice of $\sigma$-algebra of $\Lambda$ is given by the Borel $\sigma$-algebra generated by the open sets in $\Lambda$.}

What constraints should a classical embedding $\epsilon$ of $\LHsa$ satisfy? A natural requirement is that it should reproduce quantum mechanical expectation values, that is, for all $O\in\LHsa$ and $\rho\in\SH$ there should exist a measure $\mu_\rho \in L_1(\Lambda)$ such that
\begin{align}\label{eq: preserving expectation values}
    \mathbb{E}_\rho(O) = \tr[\rho O] = \int_\Lambda d\lambda\ \mu_\rho(\lambda) \epsilon(O)(\lambda)\; .
\end{align}
Is Eq.~(\ref{eq: preserving expectation values}) sufficient to exclude hidden variables for quantum theory? Clearly not. An explicit example for a classical representation is Bohm's pilot wave theory \cite{Bohm1952}. Is this the unique such representation? If not, how restrictive is Eq.~(\ref{eq: preserving expectation values})? As it turns out, not very. Indeed, Ref.~\cite{BeltramettiBugajski1995} describes a general construction for building a classical representation for quantum theory satisfying Eq.~(\ref{eq: preserving expectation values}); and already Kochen and Specker point out a trivial construction of a classical representation satisfying Eq.~(\ref{eq: preserving expectation values}) that simply defines the state space $\Lambda = \prod_{O\in\LHsa} \Sigma_O$ as the product of individual state spaces $\Sigma_O = \{\lambda:O\ra\spec(O)\}$ over all observables $O\in\LHsa$ \cite{KochenSpecker1967}.

This highlights that a map $\epsilon:\LHsa\ra L_\infty(\Lambda)$ is per se not required to preserve any algebraic structure in $\LHsa$ - even if it satisfies Eq.~(\ref{eq: preserving expectation values}) (cf. Ref.~\cite{BeltramettiBugajski1995}). In turn, asking $\epsilon$ to preserve (some of) the algebraic structure in $\LHsa$ quickly leads to no-go results, such as von Neumann's, proving that no map $\epsilon$ satisfying Eq.~(\ref{eq: preserving expectation values}) and preserving the linear structure in $\LHsa$ exists \cite{vonNeumann1932}.\footnote{Similarly, the results in Ref.~\cite{Groenewold1946,vanHove1951b} can be interpreted as showing that no classical embedding exists that preserves the full Jordan-algebraic structure on $\LHsa$. Note that this result is usually read in reverse, namely as proving the necessity of a choice of operator ordering for quantisation.} Yet, why should a classical embedding respect the algebraic structure in $\LHsa$? After all, the search for a classical description underlying quantum theory was motivated by the counterintuitive consequences implied by its algebraic apparatus. For instance, if we take the Heisenberg uncertainty relations to be a fundamental limitation on the joint measurability of observables, we cannot operationally infer the linear structure of quantum mechanics, since addition of self-adjoint operators in $\LHsa$ is defined for all, not just jointly measurable observables.\footnote{Grete Hermann was the first to point out this issue in von Neumann's argument against hidden variables \cite{CrullBacciagaluppi2016}; Bell later raised similar concerns \cite{Bell1966,Bell1982}. Still, linearity \emph{does} follow from quasi-linearity, that is, from linearity restricted to commuting operators, as a consequence of Gleason's theorem \cite{Gleason1975}. For a thorough account of this historical episode, see Ref.~\cite{Dieks2017}.}

The crucial insight by Kochen and Specker \cite{Specker1960,KochenSpecker1967} (and independently by Bell \cite{Bell1966}) is that the unavoidable disturbance of a quantum system caused by measurement - an empirical fact for (and, in this view, defining of) incompatible observables - only allows to operationally infer algebraic relations between compatible ones. These algebraic relations are independent of the formalisation, hence, if present in quantum theory should also be preserved under a classical embedding $\epsilon$. Since quantum observables are jointly measurable whenever they commute, we are thus led to impose algebraic constraints in the form of algebraic relations between commuting observables. 

Formally, let $\tO,O\in\LHsa$ be two quantum observables such that $\tO=g(O)$ for some function $g:\R\ra\R$. Then such algebraic relations should be respected by $\epsilon$, that is,
\begin{equation}\label{eq: KSNC}
    \forall O\in\LHsa,\forall g:\R\ra\R: \quad \epsilon(\tO) = \epsilon(g(O)) = g(\epsilon(O)) \; .
\end{equation}
Eq.~(\ref{eq: KSNC}) as a constraint for a (hypothetical) classical embedding of $\LHsa$ was first proposed by Kochen and Specker \cite{KochenSpecker1967}. Notably, Eq.~(\ref{eq: KSNC}) imposes a \emph{noncontextuality constraint} on $\epsilon$: the measurable function $\epsilon(\tO)$ representing the quantum observable $\tO\in\LHsa$ under the embedding $\epsilon$ does not depend on (other jointly measurable observables in) the context $C(O)$. We call a quantum subsystem $\cO\subset\LHsa$ \emph{Kochen-Specker (KS) noncontextual} if it admits a (classical) embedding $\epsilon:\cO\ra L_\infty(\Lambda)$ for some measurable space $\Lambda$ satisfying Eq.~(\ref{eq: KSNC}), and \emph{(KS) contextual} otherwise.\footnote{We thus treat the characterisation of Kochen-Specker contextuality in Eq.~(\ref{eq: KSNC}) independent of Eq.~(\ref{eq: preserving expectation values}).}\\

\textbf{KS contextuality for spin-$1$ observables.} Since quantum observables commute if and only if they are related by Eq.~(\ref{eq: commuting observables}), and since every context is generated by some self-adjoint operator by Eq.~(\ref{eq: generating observables}), it follows that the algebraic constraints in Eq.~(\ref{eq: KSNC}) are fully contained in, and thus a property of the partial order of contexts $\CH$. But how do we infer whether $\LHsa$ is Kochen-Specker (non)contextual from $\CH$? 

Consider a spin-$1$ system with Hilbert space $\cH=\C^3$, then every maximal context $C\in\CHm$ is generated by some spin-$1$ observable, $C=C(S)$ for $S=\sum_{s=0,\pm 1} sp_s$ with $p_s\in\cP_1(\cH)$. However, $S$ is not unique: any other spin-$1$ observable of the form $S_\pi=\sum_{s=0,\pm 1} \pi(s)p_s$ with $\pi\in\Sym(S):=\Sym(\mathrm{\spec(S)})$ a permutation of the eigenvalues in its spectral decomposition, generates the same context, that is, $C(S_\pi)=C(S)$.
\begin{figure}[!htb]
    \centering
    \scalebox{1.3}{\begin{tikzpicture}[node distance=2.75cm, every node/.style={scale=0.7}]
    \node(Vp1)                        {$C(S^2_\kappa)$};
    \node(Va2c)   [above= 1cm of Vp1]        {};
    \node(V1bc)   [left=0.5cm of Va2c]        {};
    \node(Vp2)   [right of=Vp1]        {$C(S^2_\pi)$};
    \node(Vp3)   [right of=Vp2]        {$C(S^2)=C(S'^2)$};
    \node(Vab3)   [above= 1cm of Vp3]        {};
    \node(V123)   [above= 1cm of Vp2]        {$C(S_\kappa)=C(S_\pi)=C(S)$};
    
    \node(Vp'2)   [right of=Vp3]      {$C(S'^2_{\pi'})$};
    \node(Vp'1)   [right of=Vp'2]       {$C(S'^2_{\kappa'})$};
    \node(Va2'c)   [above=1cm of Vp'1]        {};
    \node(V1'bc)   [right=0.5cm of Va2'c]        {};
    \node(V1'2'3)   [above= 1cm of Vp'2]      {$C(S')=C(S'_{\pi'})=C(S'_{\kappa'})$};
    
    \node(Vone)  [below= 1cm of Vp3]  {$C(\mathbbm{1})$};
    
    \node(3DotsUpLeft)     [left=0.3cm of V1bc]    {$\cdots$};
    \node(3DotsUpRight)     [right=0.3cm of V1'bc]    {$\cdots$};
    \node(3DotsDownLeft)   [left=0.3cm of Vp1]    {$\cdots$};
    \node(3DotsDownRight)   [right=0.3cm of Vp'1]    {$\cdots$};

    \draw [->](Vone) -- (Vp1);
    \draw [->](Vone) -- (Vp2);
    \draw [->](Vone) -- (Vp3);
    \draw [->](Vone) -- (Vp'2);
    \draw [->](Vone) -- (Vp'1);
    
    \draw [->](Vp1) -- (V123);
    \draw [->,dashed](Vp1) -- (V1bc);
    \draw [->](Vp2) -- (V123);
    \draw [->,dashed](Vp2) -- (Va2c);

    \draw [->](Vp3) -- (V123);
    \draw [->](Vp3) -- (V1'2'3);
    \draw [->,dashed](Vp3) -- (Vab3);
    
    \draw [->](Vp'2) -- (V1'2'3);
    \draw [->,dashed](Vp'2) -- (Va2'c);
    \draw [->](Vp'1) -- (V1'2'3);
    \draw [->,dashed](Vp'1) -- (V1'bc);
\end{tikzpicture}}
    \caption{Sketch of the partial order of contexts $\CSpin\subset\CH$ (cf. Fig.~\ref{fig: context category}) generated by spin-$1$ observables $S\in\cS\subset\LHsa$. Every maximal context $C\in\CSm$ is generated by six spin-$1$ observables $C=C(S_\pi)$, where $S_\pi=\sum_{s=0,\pm 1} \pi(s)p_s$ for $\pi\in\Sym(S)$ and $p_s\in\cP_1(\cH)$; non-trivial subcontexts are generated by a unique squared spin-$1$ observable.}
    \label{fig: spin context category}
\end{figure}

Moreover, the algebraic constraints in Eq.~(\ref{eq: KSNC}) are encoded in the inclusion relations between contexts in $\CH$. For spin-$1$ observables it is sufficient to consider $g:\R\ra\R$, $g(x)=x^2$, since $C(S)\cap C(S')=C(\mathbbm{1})=:C_\mathbbm{1}$ unless $S^2=S'^2$ in which case $C(S)\cap C(S')=C(S^2)=C(S'^2)$ (see Fig.~\ref{fig: spin context category}).\footnote{Moreover, the constant function $g:\R\ra\R$, $g(x)=1$ sends any spin-$1$ observable to the identity, that is, $C(g(S))=C(\mathbbm{1})=C_\mathbbm{1}$ for all $S\in\cS$. (Indeed, this is true for all observables $O\in\LHsa$).} However, note that there exist $\pi\in\Sym(S)$ such that $C(S^2) \neq C(S^2_\pi)$, that is, the squared spin-$1$ observables corresponding to different spin-$1$ observables, which generate the the same maximal context, generally generate different subcontexts. In other words, the algebraic constraints in Eq.~(\ref{eq: KSNC}) impose constraints on how to choose generating spin-$1$ observables across contexts.

\subsection{Valuations and the (Bell-)Kochen-Specker theorem}\label{sec: BKS theorem}

Kochen and Specker deem the constraints in Eq.~(\ref{eq: KSNC}) ``too unwiedly'' \cite{KochenSpecker1967}. Rather than characterising these constraints directly, they observe the following: if a classical embedding $\epsilon: \LHsa \ra L_\infty(\Lambda)$ satisfying Eq.~(\ref{eq: KSNC}) existed, then every (micro)state $\lambda \in \Lambda$ would define a \emph{valuation} (also \emph{truth function}) $v_\lambda: \LHsa\ra\R$ satisfying (i) $v(O) \in \mathrm{sp}(O)$ and (ii) $g(v(O))=v(g(O))$ for all $g:\R\ra\R$ and $O\in\LHsa$ (see Fig.~\ref{fig: valuation from embedding}). Indeed, setting $v_\lambda(O):=\epsilon(O)(\lambda)$, we immediately obtain from Eq.~(\ref{eq: KSNC}),
\begin{align}\label{eq: FUNC}
    v_\lambda(g(O))
    = \epsilon(g(O))(\lambda) 
    = g(\epsilon(O)(\lambda))
    = g(v_\lambda(O))\; .
\end{align}

\begin{figure}[!htb]
    \centering
    \begin{tikzpicture}[x=0.05mm,y=0.05mm]
        \node at (0,400)   (O) {$\LHsa \ni O$};
        \node at (175,400)   (OH) {};
        \node at (0,0)   (f) {$\quad \ \ L_\infty(\Lambda)\ni\epsilon(O)$};
        \node at (750,0)   (S) {$\spec{(O)} \subset \R$};
        
        \draw[thick,->,dashed] (O) -- (f) node[midway,left] {\small{$\epsilon$}};
        \draw[thick,|->,dashed] (f) -- (S) node[midway,above] {\small{$\lambda\in\Lambda$}};
        \draw[thick,|->] (OH) -- (S) node[midway,above right] {\small{$v$}};
    \end{tikzpicture}
    \caption{Given a classical embedding $\epsilon: \LHsa \ra L_\infty(\Lambda)$, every (micro)state $\lambda\in\Lambda$ defines a valuation $v_\lambda:\LHsa\ra\R$ by evaluation $v_\lambda(O) = \epsilon(O)(\lambda) \in \spec(O)$.}
    \label{fig: valuation from embedding}
\end{figure}

Given a non-degenerate spin-$1$ observable $S=\sum_{s=0,\pm 1} sp_s$ consider the functions $g_s:\R\ra\R$ defined by $g_s(x)=\delta_{sx}$. Then $g_s(v(S))=v(g_s(S))=v(p_s)$ by Eq.~(\ref{eq: FUNC}). Since $v$ must assign a unique spectral value to $S$ this implies that exactly one of the projectors $p_{-1},p_0,p_1$ is assigned the value $1$ by $v$. This argument generalises to sets of projections in any dimension: a valuation must assign every projection in such a set either the value $0$ or $1$ by the spectrum rule,\footnote{Since projections satisfy $p^2=p$, this \emph{also} follows from FUNC: $v^2(p)=v(p^2)=v(p)$ implies $v(p)\in\{0,1\}$.} and exactly one projection in every set of mutually orthogonal ones must be assigned the value $1$. Such an assignment is sometimes called a \emph{Kochen-Specker (KS) colouring}, and a set of rank-$1$ projectors in $\PH$ (equivalently rays in $\cH$) that does not admit a KS colouring is called a \emph{Kochen Specker (KS) set}.

Taking into account (the infinite set of) all rank-$1$ projections $\cP_1(\cH)$, a continuity argument based on Gleason's theorem \cite{Gleason1975} shows that $\cP_1(\cH)$ is a KS set. This was first demonstrated by Bell \cite{Bell1966} for finite-dimensional quantum systems of dimension at least three.\footnote{For a generalisation to infinite-dimensional quantum systems, see Ref.~\cite{Doering2004}.} However, it is not necessary to consider the full set of rank-$1$ projections. Ref.~\cite{KochenSpecker1967} construct a KS set comprised of 117 rays in $\R^3$.

\begin{lemma}[Bell \cite{Bell1966}, Kochen-Specker \cite{KochenSpecker1967}]\label{lm: BKS lemma}
    There exist subsets of spin-$1$ observables $\cS\subset\LHsa$ for $\dim(\cH)=3$ which admit no valuation $v:\cS\ra\R$ satisfying
    \begin{itemize}
        \item [(i)] $v(S) \in \mathrm{sp}(S)$ \hspace{1.4cm} \emph{(``spectrum rule")}
        \item [(ii)] $v(g(S)) = g(v(S))$ \hspace{0.5cm} \emph{(``functional composition (FUNC) principle" \cite{stanford})}
    \end{itemize}
    for all observables $S\in\cS$ and functions $g:\R\ra\R$.
\end{lemma}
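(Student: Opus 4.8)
The plan is to first reduce the non-existence of a valuation to a combinatorial-geometric statement about rays, and then to exhibit a configuration of rays for which the required colouring is obstructed. Following the argument already made just before the statement, for a non-degenerate spin-$1$ observable $S = \sum_{s} s\, p_s$ the functions $g_s(x) = \delta_{sx}$ together with (i) and (ii) force $v(p_s) = g_s(v(S)) \in \{0,1\}$, and since $v(S) \in \mathrm{sp}(S)$ is a single eigenvalue, exactly one of the three mutually orthogonal rank-$1$ projections $p_{-1}, p_0, p_1$ receives the value $1$. Thus any candidate valuation on a set $\cS$ of spin-$1$ observables is equivalent to a $\{0,1\}$-colouring of the rank-$1$ projections (rays in $\cH = \C^3$) occurring in the associated orthogonal triads, subject to the rule that each triad carries exactly one $1$. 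The task is therefore reduced to producing a set of rays admitting no such colouring, i.e. a Kochen-Specker set.

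\textbf{First route (Bell, via Gleason).} The cleanest choice is to take $\cS$ to be the set of \emph{all} spin-$1$ observables, so that the associated rays exhaust $\cP_1(\cH)$. A valuation would then define a function $f:\cP_1(\cH)\ra\{0,1\}$, $f(p_\psi) = v(p_\psi)$, that sums to $1$ over every orthonormal basis of $\cH$ — a frame function in the sense of Gleason. By Gleason's theorem (applicable since $\dim\cH = 3 \geq 3$), every such frame function is of the form $f(p_\psi) = \langle \psi | A | \psi \rangle$ for a positive operator $A$, and is in particular continuous on the connected space of rays. But $f$ takes both values $0$ and $1$ — each triad carries one $1$ and two $0$s — so it cannot be continuous, a contradiction. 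Here the only real input is Gleason's theorem; once granted, the contradiction is immediate, and this already establishes the lemma with an infinite $\cS$.

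\textbf{Second route (Kochen-Specker, finite set).} To obtain a \emph{finite} witnessing set $\cS$, I would instead construct an explicit obstructing configuration of rays. The governing idea is to assemble small families of interlocking orthogonal triads designed so that a colouring propagates along geometric ``chains'': fixing the value of one ray constrains, via the one-per-triad rule, the admissible values of rays at controlled relative angles. By chaining such gadgets around a suitable closed configuration one arranges that every a priori admissible assignment ultimately violates the rules somewhere — either leaving a triad with no $1$ or forcing two orthogonal rays to share the value $1$. Kochen and Specker realise this with $117$ rays in $\R^3$; verifying non-colourability then reduces to a finite (if laborious) case analysis over the admissible $\{0,1\}$-assignments.

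\textbf{Main obstacle.} In the Gleason route the entire difficulty is absorbed into Gleason's theorem, which is deep but quotable, after which the continuity/connectedness argument is routine. In the explicit finite route the main obstacle is genuinely combinatorial-geometric: one must find ray coordinates realising the orthogonality relations of the gadgets \emph{and} verify, across all cases, that no consistent colouring survives. I expect the hardest part to be exactly this explicit construction and its exhaustive verification — precisely the step Kochen and Specker themselves described as ``unwieldy''.
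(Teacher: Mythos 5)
Your proposal is correct and follows essentially the same route as the paper: the text preceding the lemma performs exactly your reduction (via $g_s(x)=\delta_{sx}$ and the spectrum rule) to the existence of a KS colouring, and then establishes non-colourability by the same two citations — Bell's continuity argument based on Gleason's theorem for the full set $\cP_1(\cH)$, and the Kochen-Specker $117$-ray construction for a finite set. No substantive difference.
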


As a consequence, we conclude that quantum theory is Kochen-Specker contextual.

\begin{theorem}[Bell \cite{Bell1966}, Kochen-Specker \cite{KochenSpecker1967}]\label{thm: BKS theorem}
    Let $d=\dim(\cH)\geq 3$. Then there exists no (classical) embedding $\epsilon:\LHsa\ra L_\infty(\Lambda)$ such that Eq.~(\ref{eq: KSNC}) holds.
\end{theorem}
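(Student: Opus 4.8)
The plan is to argue by contradiction, reducing Theorem~\ref{thm: BKS theorem} to the nonexistence of valuations established in Lemma~\ref{lm: BKS lemma}. The bridge is exactly the construction already indicated in Eq.~(\ref{eq: FUNC}): from a hypothetical embedding $\epsilon$ one extracts, at each point of the state space, a valuation on $\LHsa$, which Lemma~\ref{lm: BKS lemma} forbids. So first I would suppose, for contradiction, that a classical embedding $\epsilon:\LHsa\ra L_\infty(\Lambda)$ satisfying Eq.~(\ref{eq: KSNC}) exists. A preliminary point is that $\Lambda\neq\emptyset$: since $\epsilon(\mathbbm{1})=1$ and, by Eq.~(\ref{eq: KSNC}), $\epsilon(c\mathbbm{1})=c$ for all $c\in\R$, the image of $\epsilon$ must separate these constants, which is impossible over an empty domain.

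Fixing any $\lambda\in\Lambda$, I set $v_\lambda(O):=\epsilon(O)(\lambda)$ and verify conditions (i) and (ii) of Lemma~\ref{lm: BKS lemma}. Condition (ii) (FUNC) is immediate from Eq.~(\ref{eq: KSNC}) by pointwise evaluation, exactly as in Eq.~(\ref{eq: FUNC}). Condition (i) (the spectrum rule) is not assumed separately but follows from FUNC: choosing $g(x)=\prod_{\lambda^O_k\in\spec(O)}(x-\lambda^O_k)$ gives $g(O)=0$, hence $g(v_\lambda(O))=v_\lambda(g(O))=\epsilon(0)(\lambda)=0$, so $v_\lambda(O)$ is a root of $g$, i.e. $v_\lambda(O)\in\spec(O)$. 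Thus $v_\lambda$ is a genuine valuation. For $d=3$ the contradiction is then immediate: Lemma~\ref{lm: BKS lemma} exhibits a finite set $\cS$ of spin-$1$ observables admitting no valuation, yet $v_\lambda|_{\cS}$ is one.

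The remaining work is reducing the case $d>3$ to $d=3$, and this is where I expect the only real obstacle. The naive move---restricting $v_\lambda$ to observables supported on a \emph{fixed} three-dimensional subspace $\cH'\subset\cH$---can fail, because a valuation in dimension $d$ may place its unique value-$1$ ray in the orthogonal complement $\cH'^{\perp}$; then every ray of $\cH'$ is assigned $0$ and no Kochen--Specker contradiction ensues. Concretely, writing $P'$ for the projection onto $\cH'$, the sum rule (a consequence of FUNC applied within the context generated by a non-degenerate operator diagonal on $\cH'$ and $\cH'^{\perp}$) gives $v_\lambda(P')\in\{0,1\}$, and only the branch $v_\lambda(P')=1$ forces exactly one value-$1$ ray per orthonormal triple of $\cH'$.

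The way around this is to choose $\cH'$ adapted to $v_\lambda$ rather than in advance. Any complete orthonormal basis of $\cH$ contains exactly one ray $e$ with $v_\lambda(e)=1$; taking $\cH'$ to be a three-dimensional subspace containing $e$, the sum rule forces $v_\lambda(P')=1$, so $v_\lambda$ restricts to a bona fide Kochen--Specker colouring of $\cP_1(\cH')$. Transporting the $d=3$ configuration of Lemma~\ref{lm: BKS lemma} into $\cH'$ by a unitary (a rotated KS set is again a KS set) then yields a colourable set that Lemma~\ref{lm: BKS lemma} declares uncolourable---the desired contradiction, for every $d\geq 3$. Alternatively, one may bypass the lift entirely by invoking Bell's Gleason-based result, quoted in Sec.~\ref{sec: BKS theorem}, that the full set $\cP_1(\cH)$ is itself uncolourable for all $d\geq 3$. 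I expect the most delicate (though routine) step to be the bookkeeping of the sum rule across the extension from a triple in $\cH'$ to a full basis of $\cH$, confirming that the branch $v_\lambda(P')=1$ genuinely reproduces the one-per-triple colouring rule.
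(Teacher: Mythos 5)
Your proof is correct and follows essentially the same route as the paper, which obtains Thm.~\ref{thm: BKS theorem} directly from Lm.~\ref{lm: BKS lemma} via the pointwise valuations $v_\lambda(O)=\epsilon(O)(\lambda)$ of Eq.~(\ref{eq: FUNC}). Your careful treatment of the reduction from $d>3$ to $d=3$ (choosing the three-dimensional subspace to contain the value-$1$ ray so that the sum rule forces $v_\lambda(P')=1$) fills in a step the paper leaves implicit, as does your derivation of the spectrum rule from FUNC.
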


However, the existence of finite Kochen-Specker sets allows for a stronger statement.

\begin{theorem}[Kochen-Specker \cite{KochenSpecker1967}]\label{thm: KS theorem}
    There exist finite subsets of spin-$1$ observables $\cS\subset\LHsa$ that admit no (classical) embedding $\epsilon:\cS\ra L_\infty(\Lambda)$ such that Eq.~(\ref{eq: KSNC}) holds.
\end{theorem}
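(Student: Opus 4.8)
The plan is to observe that the only strengthening over Thm.~\ref{thm: BKS theorem} is the \emph{finiteness} of $\cS$, so the logical skeleton can be kept intact: an embedding on $\cS$ forces, at each microstate, a valuation in the sense of Lem.~\ref{lm: BKS lemma}, and a valuation in turn forces a Kochen--Specker colouring of the underlying rays. The statement then reduces to the existence of a \emph{finite} set of rays admitting no such colouring. Accordingly, I would argue by contradiction: assume a finite $\cS\subset\LHsa$ together with an embedding $\epsilon:\cS\ra L_\infty(\Lambda)$ satisfying Eq.~(\ref{eq: KSNC}), and derive a colouring of a finite KS set.

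First I would make the implication ``embedding $\Rightarrow$ valuation'' precise, reusing Eq.~(\ref{eq: FUNC}). Since an embedding is injective, $L_\infty(\Lambda)$ is nontrivial and hence $\Lambda\neq\emptyset$; fix any microstate $\lambda\in\Lambda$ and set $v_\lambda(S):=\epsilon(S)(\lambda)$. The FUNC principle (Lem.~\ref{lm: BKS lemma}(ii)) for $v_\lambda$ is then immediate from Eq.~(\ref{eq: KSNC}), exactly as in Eq.~(\ref{eq: FUNC}). The spectrum rule (Lem.~\ref{lm: BKS lemma}(i)) follows from it: applying FUNC to the minimal polynomial $g(x)=\prod_{s\in\spec(S)}(x-s)$ and, to evaluate $v_\lambda$ on the zero operator, to the constant function $0$, one gets $g(v_\lambda(S))=v_\lambda(g(S))=v_\lambda(0)=0$, whence $v_\lambda(S)\in\spec(S)$. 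Thus every microstate yields a valuation of $\cS$.

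Next I would import the finite witness: by Ref.~\cite{KochenSpecker1967} there is a \emph{finite} set $K\subset\cP_1(\cH)$ of rank-$1$ projections in $\cH=\C^3$, grouped into finitely many orthogonal triples, that admits no KS colouring. For each triple $\{p,q,r\}$ I would take the spin-$1$ observable $S=p-r$ (so that $p=g_1(S)$, $q=g_0(S)$, $r=g_{-1}(S)$ with $g_s(x)=\delta_{sx}$), and let $\cS$ be the finite family of these observables, one per triple. The key point is that $\epsilon$, and hence $v_\lambda$, is a function of the \emph{operator} alone, so a ray lying in two triples is assigned the same value irrespective of the context through which it is reached -- this is the noncontextuality encoded in Eq.~(\ref{eq: KSNC}). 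Since $v_\lambda(S)\in\spec(S)=\{-1,0,1\}$, the indicator values $v_\lambda(p)=g_1(v_\lambda(S))$ etc.\ are one-hot, so $v_\lambda$ assigns the value $1$ to exactly one projection in each triple: a KS colouring of $K$. This contradicts that $K$ admits none, so no embedding $\epsilon$ exists.

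The hard part will not be this deduction but the geometric input it rests on: the existence of a \emph{finite} KS set, i.e.\ an explicit finite family of orthogonal triples in $\R^3$ with no consistent $\{0,1\}$-labelling. This is precisely what separates Thm.~\ref{thm: KS theorem} from Thm.~\ref{thm: BKS theorem}, whose infinite version follows from the Gleason--Bell continuity argument; here one genuinely needs a combinatorial construction (the $117$-ray set of Ref.~\cite{KochenSpecker1967}), which I would take as given. The remaining points are bookkeeping: ensuring $\Lambda\neq\emptyset$ (guaranteed by injectivity of an embedding) and arranging $\cS$ so that every projection of $K$ is a function of some member of $\cS$, so that Eq.~(\ref{eq: KSNC}) indeed constrains the induced colouring.
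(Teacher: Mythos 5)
Your proposal is correct and follows essentially the same route as the paper: the paper likewise reduces Thm.~\ref{thm: KS theorem} to Lem.~\ref{lm: BKS lemma} via the observation (Eq.~(\ref{eq: FUNC}) and Fig.~\ref{fig: valuation from embedding}) that each microstate of a hypothetical embedding yields a valuation, hence a KS colouring, and then invokes the finite $117$-ray Kochen--Specker set as the combinatorial witness. Your additional bookkeeping (deriving the spectrum rule from FUNC, $\Lambda\neq\emptyset$, closing $\cS$ under the relevant functions of its members) only makes explicit what the paper leaves implicit.
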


Thm.~\ref{thm: KS theorem} gives rise to the problem of characterising Kochen-Specker contextuality.
\begin{align}\label{KS problem}
    \mathrm{\mathbf{KS\ problem:}\ characterise\ subsets\ } \cS\subset\LHsa \mathrm{\ that\ are\ KS\ contextual.} \tag{$*$}
\end{align}
Clearly, every Kochen-Specker set yields a proof of Thm.~\ref{thm: KS theorem}, which we call the ``Kochen-Specker theorem".\footnote{The main result of Ref.~\cite{KochenSpecker1967} is not stated concisely as a theorem, thus leaving room for interpretation to what one means by ``the Kochen-Specker theorem". Our interpretation differs from that in Refs.~\cite{IshamButterfieldI,DoeringFrembs2019a,BudroniEtAl2022} which call Lm.~\ref{lm: BKS lemma} the ``(Bell-)Kochen-Specker theorem". Instead, we take the main assertion of Ref.~\cite{KochenSpecker1967} to be about the impossiblity of the classical embedding problem (\ref{KS problem}).} Indeed, many more proofs of Thm.~\ref{thm: KS theorem} exist that construct all sorts of KS sets including Ref.~\cite{Mermin1990,Peres1991,ZimbaPenrose1993,Kernaghan1994,KernaghanPeres1995,CabelloEstebanzGarcia-Alcaine1997}. In particular, the size of KS sets has been drastically improved: the current record stands at 31 rays in dimension three \cite{ConwayKochen2006} and at 18 rays in dimension four \cite{CabelloEstebanzGarcia-Alcaine1997}, where the latter is also known to be optimal \cite{Zhen-PengChenGuehne2020}. This suggests that the nonexistence of valuations characterises problem (\ref{KS problem}). However, this is not the case: the search for experimental tests of contextuality in the marginal approach \cite{Pitowsky1991,BadziagBengtssonCabelloPitowsky2009,KleinmannEtAl2012,BudroniEtAl2022} has revealed an altogether different proof strategy.

\subsection{(State-independent) violation of noncontextuality inequalities}\label{sec: noncontextuality inequalities}

The ``modern view'' on Kochen-Specker (KS) contextuality studies correlations. Various frameworks with slightly different notation and conventions exist. We defer a thorough comparison between the algebraic and the marginal approach to contextuality to a companion paper \cite{Frembs2024b}. Here, we restrict our focus to resolving an apparent puzzle, first noted in Ref.~\cite{YuOh2012}, between the constraints derived from Eq.~(\ref{eq: KSNC}) for valuations, that is, Kochen-Specker sets, and proofs of the KS theorem that do not constitute KS sets.\\

\textbf{No-disturbance.} In our notation\footnote{We follow the notation in the topos approach \cite{IshamButterfieldI,IshamButterfieldII,IshamButterfieldIII,IshamButterfieldIV,IshamDoeringI,IshamDoeringII,IshamDoeringIII,IshamDoeringIV,DoeringFrembs2019a,FrembsDoering2022a,FrembsDoering2023}, which overlaps in large part with the sheaf-theoretic treatment \cite{AbramskyBrandenburger2011} and the notation in Ref.~\cite{BudroniEtAl2022}.}, a \emph{state} (also \emph{correlation}, \emph{behaviour} or \emph{empirical model} \cite{AbramskyBrandenburger2011}) of $\LHsa$ is a collection of probability distributions $\gamma=(\gamma_C)_{C\in\CH}$ with $\gamma_C:\PC\ra [0,1]$ for every context $C\in\CH$. For a state to preserve the noncontextuality constraints in Eq.~(\ref{eq: KSNC}) is to preserve the order structure in $\CH$, that is,
\begin{align}\label{eq: no-disturbance}
    \gamma_C|_{\tC}=\gamma_{\tC} = \gamma_{C'}|_{\tC} \quad\quad \forall \tC \subset C,C' \in \CH\; ,
\end{align}
where $|_\tC$ means marginalisation and is defined as $\gamma_\tC(\tp) = \gamma_C(\tp)$ for all $\tp\in\cP(\tC)\subset\PC$. Eq.~(\ref{eq: no-disturbance}) is referred to as \emph{no-disturbance} or the \emph{Gleason property} \cite{RamanathanEtAl2012,DoeringFrembs2019a,FrembsDoering2022a,Frembs2022a,FrembsDoering2023}.\\

\textbf{Marginal problem.} By Gleason's theorem \cite{Gleason1975,Doering2008,DoeringFrembs2019a}, every state, that is, every non-disturbing behaviour $\gamma=(\gamma_C)_{C\in\CH}$ of a system described by a Hilbert space $\cH$ in $\dim(\cH)\geq 3$ corresponds with a unique quantum state, that is, there exists a density matrix $\rho\in\SH$ such that $\gamma_C(p)=\gamma^\rho_C(p)=\tr[\rho p]$ for all $p\in\PC$, $C\in\CH$.\footnote{A generalisation of Gleason's theorem to the bipartite case has recently been obtained in Ref.~\cite{FrembsDoering2023}.} Clearly then, no-disturbance is not yet sufficient to restrict to classical behaviours, in the form of a correlation on $\Lambda$. For this to be the case the probability distributions $\gamma_C$ for all contexts $C\in\CH$ must arise as marginals from a single probability distribution $\mu_\gamma\in L_1(\Lambda)$ (on the state space $\Lambda$, under the embedding $\epsilon:\LHsa\ra L_\infty(\Lambda)$), that is,
\begin{align}\label{eq: marginal problem}
    \gamma_C(p)
    = \int_\Lambda d\lambda\ \mu_\gamma(\lambda) \epsilon(p)(\lambda)
    = \int_\Lambda d\lambda\ \mu_\gamma(\lambda) \chi(\lambda\mid p)
    = \mu(\Lambda_p) \; ,
\end{align}
for all $p\in\PC$ and $C\in\CH$ and where $\varepsilon(p) = \chi(\cdot\mid p)$ denotes the indicator function of the measurable subset $\Lambda_p\subset\Lambda$ in the decomposition $\Lambda = \bigcupdot_{p\in\cP_1(C)} \Lambda_p$.\footnote{Formally, it is sufficient that $\epsilon$ induces a disjoint decomposition of $\Lambda$ for any resolution of the identity $\sum_{p\in\cP_1(C)} p = \mathbbm{1}$ up to negligible sets, that is, up to sets of measure zero.} Here, $\lambda$ is thought of as a hidden variable, similar to the hidden variables describing a (classical) common cause in Bell's theorem \cite{Bell1966}. Indeed, factorisability of a behaviour is a special case of Eq.~(\ref{eq: marginal problem}), and has motivated the modern view on contextuality \cite{BudroniEtAl2022}. Both are instantiations of a \emph{marginal problem}, which has already been studied by Boole \cite{Pitowsky1989}, and more recently and systematically in terms of correlation polytopes in Ref.~\cite{Pitowsky1991,KleinmannEtAl2012}.\\

\textbf{Noncontextuality inequalities and SI-C sets.} Any correlation polytope can be described in terms of the inequalities defining its facets. In the case of the (Bell) local polytope, these inequalities are called \emph{Bell inequalities} \cite{Bell1964,Pitowsky1989,BrunnerEtAl2014}; in the more general case of the noncontextuality polytope, they are called \emph{noncontextuality inequalities} \cite{Pitowsky1991,ChavesFritz2012,KleinmannEtAl2012}. The marginal approach to contextuality is concerned with the study of the noncontextuality polytope in terms of its facet-defining inequalities. Quantum states generally violate these inequalities, thus proving the existence of non-classical correlations \cite{KlyachkoEtAl2008}. What is more, there exist noncontextuality inequalities that are violated by every quantum state \cite{YuOh2012,BengtssonBlanchfieldCabello2012,LeiferDuarte2020}. For this reason, the observables involved in such inequalities (equivalently, the generating rank-$1$ projections in their spectral decompositions) are called \emph{state-independent contextuality (SI-C) sets}. For more details, the history and present status of this approach to contextuality, we refer to the excellent review in Ref.~\cite{BudroniEtAl2022} (and references therein), as well as to the companion paper \cite{Frembs2024b}.

\subsection{Algebraic vs marginal approach}\label{sec: fundamental discrepancy}

The original approach by Kochen and Specker (in Sec.~\ref{sec: KS contextuality}) is algebraic, and motivated by logical considerations \cite{Specker1960}. In contrast, the more modern, marginal approach (in Sec.~\ref{sec: noncontextuality inequalities}) emphasises correlations over logic, and is motivated by Bell inequalities and experimental tests of contextuality (see Ref.~\cite{BartosikEtAl2009,KirchmairEtAl2009,AmselemEtAl2009,GuehneEtAl2010,MoussaEtAl2010,LapkiewiczEtAl2011} for instance). Yet, despite this difference in approach, one would expect, and consistency demands, that both notions ultimately capture equivalent views on the same underlying concept. A partial consistency check to this end can be seen in the fact that every Kochen-Specker (KS) set can be turned into a noncontextuality inequality that is violated by every quantum state \cite{BadziagBengtssonCabelloPitowsky2009}. Conversely, however, it came as a surprise when Ref.~\cite{YuOh2012} discovered an arrangement of only 13 rays, which constitute a SI-C set, that is, for which there exists an associated noncontextuality inequality that is violated by every quantum state, yet which does not constitute a KS set, that is, it does admit valuations, equivalently KS colourings. Indeed, many more SI-C sets have since been found \cite{BengtssonBlanchfieldCabello2012,XuChenSu2015,LeiferDuarte2020}.

The tension this creates has been felt immediately. For instance, Ref.~\cite{YuOh2012} argues that the algebraic structure imposed by Eq.~(\ref{eq: KSNC}) (in particular, in the form of the FUNC principle for valuations in Eq.~(\ref{eq: FUNC})) is ``too strong and unnecessary". However, if this was true it would challenge the very notion and conceptual foundation of Kochen-Specker contextuality, which is based on Eq.~(\ref{eq: KSNC}). Indeed, since Eq.~(\ref{eq: KSNC}) is fully encoded in the partial order of contexts $\CH$, which implies the no-disturbance constraints on correlations via Eq.~(\ref{eq: no-disturbance}), a modification to the notion of KS noncontextuality would necessarily also affect the very definition of the (KS) noncontextuality polytope. Viewed this way, the issue is a severe one, suggesting that we do not have a coherent understanding of this core principle of quantum mechanics.

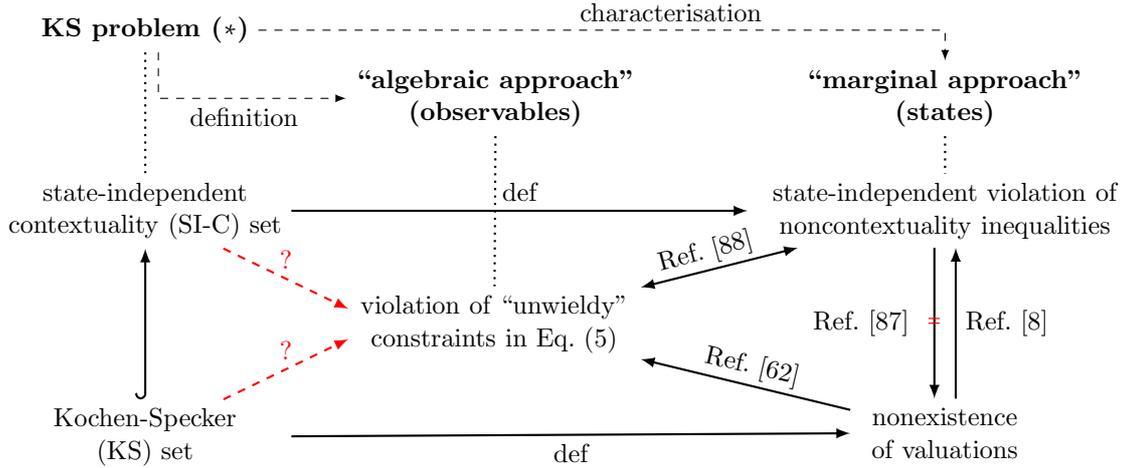
\begin{figure}[!htb]
    \centering
    \begin{tikzpicture}[every node/.style={scale=.9}]
        \node(SICS) [text width=4cm,align=center] {state-independent contextuality (SI-C) set};
        \node(KSS) [below=2cm of SICS,text width=4cm,align=center] {Kochen-Specker (KS) set};
        \node(KSNC) [below right=0.5cm and 0.75cm of SICS,text width=4cm,align=center] {violation of  ``unwieldy" constraints in Eq.~(\ref{eq: KSNC})};
        \node(SIC) [right=6cm of SICS,text width=5.5cm,align=center] {state-independent violation of
        noncontextuality inequalities};
        \node(NV) [below=2cm of SIC,text width=2.5cm,align=center] {nonexistence of valuations};
        \node(P) [above=1.6cm of SICS,align=center] {\textbf{KS problem (\ref{KS problem})}};
        \node(MA) [above=0.5cm of SIC,align=center] {\textbf{``marginal approach''}\\ \textbf{(states)}};
        \node(AA) [left=2cm of MA,align=center] {\textbf{``algebraic approach''}\\ \textbf{(observables)}};

        \draw [>=latex,<->,thick] (SIC) -- (KSNC)
        node[above=-0.25cm,midway,xshift=-0.2cm] {\rotatebox{15}{Ref.~\cite{YuGuoTong2015}}};
        \draw [-latex,thick,dashed,red] (KSS)--([yshift=-0.2cm]KSNC.west)
        node[above,midway] {?};
        \draw [-latex,thick,dashed,red] (SICS)--([yshift=0.2cm]KSNC.west)
        node[above,midway] {?};
        \draw [-latex,thick,] ([xshift=4pt]NV.north)--([xshift=4pt]SIC.south)
        node[right,midway] {Ref.~\cite{BadziagBengtssonCabelloPitowsky2009}};
        \draw [-latex,thick] ([xshift=-4pt]SIC.south)--([xshift=-4pt]NV.north)
        node[midway] {\color{red}{$=$}}
        node[left=0.2cm,midway] {Ref.~\cite{YuOh2012}};
        \draw [>=latex,->,thick] (NV)--(KSNC)
        node[above=-0.2cm,xshift=0.1cm,midway,align=center] {\rotatebox{-13}{Ref.~\cite{KochenSpecker1967}}};
        \draw [>=latex,->,thick] (SICS)--(SIC) node[midway,above] {def};
        \draw [>=latex,->,thick] (KSS)--(NV) node[midway,below] {def};
        \draw [right hook-latex,thick] (KSS)--(SICS);

        \draw [>=latex,->,dashed] ([xshift=15pt]P)|-(AA.west) node[midway,below,xshift=1.25cm] {definition};
        \draw [>=latex,->,dashed] (P)-|(MA.north) node[midway,above,xshift=-4cm] {characterisation};

        \draw [dotted,thick] (P.south)--(SICS.north);
        \draw [dotted,thick] (AA.south)--(KSNC.north);
        \draw [dotted,thick] (MA.south)--(SIC.north);
    \end{tikzpicture}
    \caption{Schematic depicting the relationship between two different perspectives on Kochen-Specker contextuality: the problem in (\ref{KS problem}) regards the characterisation of subsets of observables to which the Kochen-Specker theorem (Thm.~\ref{thm: KS theorem}) applies, and is phrased in algebraic terms, based on the defining Eq.~(\ref{eq: KSNC}) as first proposed by Kochen and Specker \cite{KochenSpecker1967}. Yet, its existing (partial) characterisations in terms of SI-C (KS) sets are obtained via the marginal approach, that is, in terms of statements about correlations and their violation of noncontextuality inequalities \cite{Pitowsky1991,SimonBruknerZeilinger2001,Larsson2002,Cabello2008b,KlyachkoEtAl2008,KleinmannEtAl2012,CabelloKleinmannBudroni2015}.}
    \label{fig: discrepancy}
\end{figure}

Of course, a simple way out of this conundrum is to accept that the existence of Kochen-Specker sets is sufficient but not necessary to prove the Kochen-Specker theorem (Thm.~\ref{thm: KS theorem}). However, this merely avoids the question why valuations do not fully characterise the Kochen-Specker problem (\ref{KS problem}), at least not to the extent of (state-independent) violations of noncontextuality inequalities. Put differently, it leaves open the problem of identifying the precise mathematical structure underlying Kochen-Specker contextuality.

At the heart of the issue seems to be the following mismatch: while the KS problem (\ref{KS problem}) is intrinsically algebraic, known (partial) characterisations of it (KS and SI-C sets) are stated in terms of states, no characterisation in terms of the observables themselves, that is, no algebraic reformulation of the ``unwieldy" constraints in Eq.~(\ref{eq: KSNC}) is available (see Fig.~\ref{fig: discrepancy}). Here, (and in more detail in Ref.~\cite{Frembs2024b}), we will resolve this mismatch. Specifically, we will provide  a complete reformulation of Eq.~(\ref{eq: KSNC}). This requires new tools, which we introduce in the next sections.

\section{Reformulating Kochen-Specker contextuality}\label{sec: wielding the unwieldy}

\subsection{The Kochen-Specker theorem revisited}\label{sec: BKS theorem revisited}

As outlined in Sec.~\ref{sec: BKS theorem}, proofs of the Kochen-Specker (KS) theorem in terms of Kochen-Specker (KS) sets reduce it to a colouring problem. We recall the details of this  argument, which upon closer inspection reveal a stronger constraint.

Kochen and Specker consider a spin-1 system and observe that triples of squared spin-$1$ observables $S^2_x,S^2_y,S^2_z$, corresponding to measurements in orthogonal directions in $\R^3$, share the same spectrum $\spec(S^2_x) = \spec(S^2_y) = \spec(S^2_z) = \{0,1\}$, mutually commute $[S^2_x,S^2_y] =[S^2_y,S^2_z] = [S^2_z,S^2_x] = 0$, and satisfy the algebraic constraint $S^2=S^2_x + S^2_y + S^2_z = 2\mathbbm{1}$. It follows that a valuation must assign two squared spin-1 observables in every triple the value $1$ and the remaining one the value $0$. Under the (implicitly assumed) identification of (squared) spin-1 observables with directions in space, these properties thus define a \emph{Kochen-Specker (KS) colouring} of the unit sphere, that is a map $\R^3 \supset S^2 \ra \{0,1\}$ such that (i) in every set of mutually orthogonal rays one is assigned one of two colours, say red (corresponding to the value $0$ above), and the other two vectors are assigned another colour, say green (corresponding to the value $1$ above).

By Lm.~\ref{lm: BKS lemma}, no KS colouring under the above identification of (squared) spin-1 observables with directions in $\R^3$ exists. It is therefore impossible to consistently label one of the (squared) spin-1 observables in every commuting triple by e.g. `$S_z$'. This is the price we must pay in insisting on a classical embedding: it is necessarily contextual in its labelling of these observables. However, from this perspective a KS colouring is clearly a restriction: assuming a classical embedding, we ought to be able to \emph{label all squared (and thus all) spin-1 observables consistently}. In the next sections, we develop this intuition.

\subsection{Context connections and context cycles}\label{sec: context connection}

In this section, we formalise the idea sketched in the previous section on labelling (squared) spin-1 observables. Since an observable is uniquely determined by its spectral decomposition, we will do so by making explicit the identifications between projections across different (maximal) contexts.

\begin{definition}\label{def: context connection}
    Let $\CSpin\subset\CH$ be the partial order of contexts of a quantum subsystem, generated by $\cS\subset\LHsa$. A \emph{(context) connection $\fl = (l_{C'C})_{C,C'\in\CSm}$ on $\CSpin$} is a collection of bijective maps $l_{C'C}: \mc{P}_1(C) \ra \mc{P}_1(C')$ with $l_{CC'} = l^{-1}_{C'C}$ and such that
    \begin{align}\label{eq: noncontextual context connection}
        l_{C'C}|_{\mc{P}_1(C\cap C')} = \id \quad \forall C,C' \in \CSm\; .
    \end{align}
\end{definition}

By fixing the elements $l_{C'C}$ of a context connection $\fl$ on common projections $\mc{P}_1(C\cap C')$ in subcontexts $C \cap C'$, Eq.~(\ref{eq: noncontextual context connection}) expresses a consistency condition for $\fl$ with the order structure of $\CSpin$. For a schematic representation of a context connection, see Fig.~\ref{fig: context connection}.

\begin{figure}
    \centering
    \scalebox{1.3}{\begin{tikzpicture}[node distance=2.75cm, every node/.style={scale=0.7}]
    \node(Vp1)                        {};
    \node(Va2c)   [above= 1cm of Vp1]        {$C$};
    \node(V1bc)   [left=0.5cm of Va2c]        {};
    \node(Vp2)   [right of=Vp1]        {$C\cap C'$};
    \node(Vp3)   [right of=Vp2]        {$C' \cap C''$};
    \node(Vab3)   [above= 1cm of Vp3]        {};
    \node(V123)   [above= 1cm of Vp2]        {$C'$};
    
    \node(Vp'2)   [right of=Vp3]      {$C'' \cap C'''$};
    \node(Vp'1)   [right of=Vp'2]       {};
    \node(Va2'c)   [above=1cm of Vp'1]        {$C'''$};
    \node(V1'bc)   [right=0.5cm of Va2'c]        {};
    \node(V1'2'3)   [above= 1cm of Vp'2]      {$C''$};
    
    \node(Vone)  [below= 1cm of Vp3]  {$C_\one$};
    
    \node(3DotsUpLeft)     [left=0.3cm of V1bc]    {$\cdots$};
    \node(3DotsUpRight)     [right=0.3cm of V1'bc]    {$\cdots$};
    \node(3DotsDownLeft)   [left=0.3cm of Vp1]    {$\cdots$};
    \node(3DotsDownRight)   [right=0.3cm of Vp'1]    {$\cdots$};

    \draw [->,dashed](Vone) -- (Vp1);
    \draw [->](Vone) -- (Vp2);
    \draw [->](Vone) -- (Vp3);
    \draw [->](Vone) -- (Vp'2);
    \draw [->,dashed](Vone) -- (Vp'1);
    
    \draw [->,dashed](Vp1) -- (V123);
    \draw [->,dashed](Vp1) -- (V1bc);
    \draw [->](Vp2) -- (V123);
    \draw [->](Vp2) -- (Va2c);

    \draw [->](Vp3) -- (V123);
    \draw [->](Vp3) -- (V1'2'3);
    \draw [->,dashed](Vp3) -- (Vab3);
    
    \draw [->](Vp'2) -- (V1'2'3);
    \draw [->](Vp'2) -- (Va2'c);
    \draw [->,dashed](Vp'1) -- (V1'2'3);
    \draw [->,dashed](Vp'1) -- (V1'bc);

    \draw [->,db,thick] (Va2c) to[bend left] node[midway,below] {\large\textbf{$l_{C'C}$}} (V123);
    \draw [->,db,thick] (V123) to[bend left] node[midway,below] {\large\textbf{$l_{C''C'}$}} (V1'2'3);
    \draw [->,db,thick] (Va2c) to[bend left] node[midway,below] {\large\textbf{$l_{C''C}\ \ \ \ \ $}} (V1'2'3);
    \draw [->,db,thick] (V1'2'3) to[bend left] node[midway,below] {\large\textbf{$l_{C'''C''}$}} (Va2'c);
    \draw [->,db,thick] (Va2c) to[bend left] node[midway,above] {\large\textbf{$l_{C'''C}$}} (Va2'c);
    \draw [->,db,thick,dashed] (Va2'c) to[bend left] (Vab3);
    \draw [->,db,thick] (V123) to[bend left] node[midway,below] {\large\textbf{$\ \ \ \ \ \ \ l_{C'''C'}$}} (Va2'c);

    \draw [->,db,thick,dashed] (Vab3) to[bend left] (V123) ;
    \draw [->,db,thick,dashed] (V1'2'3) to[bend left] (Vab3) ;
    \draw [->,db,thick,dashed] (Vab3) to[bend left] (Va2c);

    \draw [-,db,thick,dotted] (V1bc) -- (Va2c) ;
    \draw [-,db,thick,dotted] (Va2'c) -- (V1'bc) ;
\end{tikzpicture}}
    \caption{Schematic of a context connection $\fl = (l_{C'C})_{C,C'\in\CSm}$ (blue) on a context category $\CSpin\subset\cC(\C^3)$ generated by a set of spin-$1$ observables $\cS\subset\LHsa$ (black). $\fl$ preserves subcontexts (generated by common squared spin-$1$ observables), that is, $l_{C'C}|_{\mc{P}_1(C\cap C')} = \mathrm{id}$ for all maximal contexts $C,C'\in\CSm$.}
    \label{fig: context connection}
\end{figure}

Next, in order to express the constraints imposed by Eq.~(\ref{eq: KSNC}) in terms context connections on $\CSpin\subset\CH$, we consider their action along context cycles (see Fig.~\ref{fig: context cycle}).

\begin{definition}\label{def: context cycle}
    A \emph{context cycle $\gamma$ in $\CSm\subset\CHm$} is a tuple $\gamma=(C_0,\cdots,C_{n-1})$ of 
    maximal contexts $C_i\in\CSm$, together with subcontexts $C_{i\cap i+1}=C_i \cap C_{i+1}$ for $i\in\zz_n$.\footnote{Here, $\zz_n$ denotes the abelian group under addition modulo $n$, in particular, $(n-1)+1 = 0 \in \zz_n$.}
\end{definition}

\begin{figure}
    \centering
    \scalebox{1.3}{\begin{tikzpicture}[node distance=3cm, every node/.style={scale=0.75}]
    \node(V-2)   {$C_{n-2}$};
    \node(V-1)   [right=1cm of V-2]        {$C_{n-1}$};
    \node(V0)   [right=1.1cm of V-1]        {$C_0$};
    \node(V1)   [right=1.2cm of V0]        {$C_1$};
    \node(V2)   [right=1.2cm of V1]        {$C_2$};
    
    \node(V-21)   [below=1cm of V-2, xshift=1.1cm]   {$C_{n-2}\cap C_{n-1}$};
    \node(V-10)   [below=1cm of V-1, xshift=1.1cm]   {$C_{n-1}\cap C_0$};
    \node(V01)   [below=1cm of V0, xshift=1.1cm]   {$C_0\cap C_1$};
    \node(V12)   [below=1cm of V1, xshift=1.1cm]   {$C_1\cap C_2$};
    
    \node(3DotsUpLeft)     [left=0.3cm of V-2]    {$\cdots$};
    \node(3DotsUpRight)     [right=0.3cm of V2]    {$\cdots$};
    \node(3DotsDownLeft)   [left=0.3cm of V-21]    {$\cdots$};
    \node(3DotsDownRight)   [right=0.3cm of V12]    {$\cdots$};

    \draw [->,thick] (V-21) -- (V-2);
    \draw [->,thick] (V-21) -- (V-1);
    \draw [->,thick] (V-10) -- (V-1);
    \draw [->,thick] (V-10) -- (V0);
    \draw [->,thick] (V01) -- (V0);
    \draw [->,thick] (V01) -- (V1);
    \draw [->,thick] (V12) -- (V1);
    \draw [->,thick] (V12) -- (V2);

    \draw [->,db,thick] (V-2) to[bend left] node[midway,above] {\large$l_{C_{n-1}C_{n-2}}$} (V-1);
    \draw [->,db,thick] (V-1) to[bend left] node[midway,above] {\large$l_{C_0C_{n-1}}$} (V0);
    \draw [->,db,thick] (V0) to[bend left] node[midway,above] {\large$l_{C_1C_0}$} (V1);
    \draw [->,db,thick] (V1) to[bend left] node[midway,above] {\large$l_{C_2C_1}$} (V2);

    \node(V-2x)   [left=-0.1cm of V-2,yshift=0.2cm]   {};
    \draw [<-,db,thick] (V-2x) arc (60:90:1);
    \node(V2x)   [right=-0.15cm of V2,yshift=0.2cm]   {};
    \draw [->,db,thick] (V2x) arc (120:90:1);
\end{tikzpicture}}
    \caption{Schematic of a context cycle $\gamma=(C_0,\cdots,C_{n-1})$ with $C_i\in\CHm$ and $C_{i \cap (i+1)} = C_{i+1}\cap C_i$ for all $i\in\zz_n$ (black), and elements of a context connection $\fl$ (blue).}
    \label{fig: context cycle}
\end{figure}

\subsection{Casting Kochen-Specker contextuality in geometric form}\label{sec: CNC}

Kochen and Specker deem the constraints in Eq.~(\ref{eq: KSNC}) ``too unwieldy''. To date, no alternative algebraic formulation of Kochen-Specker noncontextuality has emerged. And at the absence of a unified mathematical framework of Kochen-Specker contextuality, the original algebraic approach has thus been largely abandoned, in favour of the study of contextual correlations \cite{BudroniEtAl2022} (see also Fig.~\ref{fig: discrepancy}). In contrast to this development, we now show that the concept of context connections (see Sec.~\ref{sec: context connection}) allows to re-express Kochen-Specker noncontextuality in a new, and inherently geometric form.

For simplicity and conceptual clarity, here we focus on the case of spin-$1$ observables.

\begin{theorem}\label{thm: CNC}
    Let $\cS\subset\LHsa$ be a subset of spin-$1$ observables in $\dim(\cH)=3$, and denote by $\CSpin\subset\CH$ the partial order of contexts generated by $\cS$. If $\cS$ admits a (classical) embedding $\epsilon:\cS\ra L_\infty(\Lambda)$ satisfying Eq.~(\ref{eq: KSNC}), then there exists a context connection $\fl = (l_{C'C})_{C,C'\in\CSm}$ on $\CSpin$ which satisfies the \emph{triviality constraints},
    \begin{equation}\label{eq: CNC}
        \circ_{i=0}^{n-1} l_{C_{(i+1)}C_i}=\id\; ,
    \end{equation}
    for every context cycle $\gamma = (C_0,\cdots,C_{n-1})$ in $\CSm$.
\end{theorem}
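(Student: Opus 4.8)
The plan is to convert the analytic datum of the embedding into the combinatorial datum of a connection, and then to read off triviality of the holonomy from the fact that all contexts are realised simultaneously over a \emph{single} space $\Lambda$. First I would unpack what $\epsilon$ does on projections. Applying Eq.~(\ref{eq: KSNC}) with $g(x)=x^2$ gives $\epsilon(p)=\epsilon(p)^2$, so every rank-$1$ projection $p$ occurring in $\cS$ is sent to an indicator function $\epsilon(p)=\chi_{\Lambda_p}$ of a measurable set $\Lambda_p\subseteq\Lambda$. Two bookkeeping facts are then immediate: (i) $\Lambda_p$ is \emph{context-independent}, i.e.\ it does not depend on which maximal context $C\ni p$ we view $p$ in (this is precisely the noncontextuality encoded in Eq.~(\ref{eq: KSNC})); and (ii) for each $C\in\CSm$ the three sets $\{\Lambda_p\}_{p\in\mc{P}_1(C)}$ partition $\Lambda$ up to measure zero, since $\sum_{p\in\mc{P}_1(C)}p=\mathbbm{1}$ and, by Eq.~(\ref{eq: FUNC}), each valuation $v_\lambda(S)=\epsilon(S)(\lambda)$ lands in $\spec(S)$ and hence selects a unique ray.

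Next I would reduce the triviality constraint to a global labelling. The structural point is that distinct maximal contexts meet in at most one ray, and when $C\cap C'\neq C_\mathbbm{1}$ the common ray $p$ is exactly the $0$-direction of the shared squared observable $S^2=S'^2=\mathbbm{1}-p$ generating $C\cap C'$; correspondingly $\Lambda_p$ is literally the same cell in the partitions attached to $C$ and to $C'$. I would therefore aim to produce a labelling $\ell_C:\mc{P}_1(C)\to\{1,2,3\}$ that is a bijection on every context (a ``rainbow'' colouring) and gives any shared ray the same colour in both contexts, and then \emph{define} $l_{C'C}:=\ell_{C'}^{-1}\circ\ell_C$. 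By construction $l_{CC'}=l_{C'C}^{-1}$, and because a shared ray carries a single colour, $l_{C'C}$ restricts to the identity on $\mc{P}_1(C\cap C')$, so $\fl=(l_{C'C})$ is a context connection in the sense of Def.~\ref{def: context connection}. Triviality of the holonomy is then automatic: for any context cycle $\gamma=(C_0,\dots,C_{n-1})$ the composite in Eq.~(\ref{eq: CNC}) telescopes, $(\ell_{C_0}^{-1}\ell_{C_{n-1}})\cdots(\ell_{C_1}^{-1}\ell_{C_0})=\ell_{C_0}^{-1}\ell_{C_0}=\id$, using $C_n=C_0$.

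The real work, and the step I expect to be the main obstacle, is the construction of the labelling $\ell$ in the previous paragraph: showing that the cells $\{\Lambda_p\}$ can be organised into three colour classes that are rainbow on every context simultaneously. A single valuation only $2$-colours each context (it marks the selected ray and leaves the other two a priori indistinguishable), so $\ell$ must be extracted from the \emph{whole} embedding at once, and naive cell matching fails because the two non-shared cells of $C$ and of $C'$ generally split $\Lambda\setminus\Lambda_p$ differently. This is exactly where the existence of $\epsilon$ does genuine work beyond no-disturbance: Eq.~(\ref{eq: no-disturbance}) only forces the cell of a shared ray to agree locally, whereas $\epsilon$ realises all the partitions inside one $\Lambda$, which is what should permit a choice of colours that closes up around every cycle. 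Concretely I would fix a base context $C_*$, colour its rays arbitrarily, and propagate the colouring along chains of pairwise-intersecting contexts, matching shared rays via their common cells; the colouring is then free (hence harmless) on rays lying in no shared subcontext, and the whole difficulty is to prove that this propagation is path-independent on the intersection graph. In carrying this out I would need to control two genuine edge cases that could destroy rainbow-ness, namely degenerate cells $\Lambda_p=\varnothing$ (two rays of a context never selected, so not separated by the $\{\Lambda_p\}$) and the residual $\zz_2$ ambiguity in matching the two non-shared rays across an edge; here I expect the sign relation $\epsilon(S)^2=\epsilon(S^2)=\epsilon(S'^2)=\epsilon(S')^2$ forced on adjacent generators to be the tool that pins these choices down consistently and thereby yields path-independence.
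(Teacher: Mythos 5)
Your reduction is correct as far as it goes: the cells $\Lambda_p=\epsilon(p)^{-1}(1)$ are context-independent and partition $\Lambda$ context-wise, and a context connection satisfying Eq.~(\ref{eq: CNC}) for every cycle is indeed the same datum as a family of bijections $\ell_C:\mc{P}_1(C)\to\{1,2,3\}$ that agree on shared rays, via $l_{C'C}=\ell_{C'}^{-1}\circ\ell_C$ and the telescoping of the holonomy. This is essentially the content of Thm.~\ref{thm: KSNC = 3-colouring}, which the paper \emph{derives from} Thm.~\ref{thm: CNC}; you invert that logical order, which is legitimate since the two formulations are equivalent. The problem is that you stop exactly where the theorem begins: the existence of the global rainbow labelling $\ell$ \emph{is} the entire content of the statement, and you explicitly defer it (``the real work'', ``the main obstacle'') to a propagation argument whose path-independence you do not establish. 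As written, the proposal is a correct reformulation plus an unproved claim, not a proof.

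Moreover, the tool you nominate for closing the gap cannot do the job. For maximal contexts $C=C(S)$ and $C'=C(S')$ sharing the ray $p$, the relation $\epsilon(S)^2=\epsilon(S^2)=\epsilon(S'^2)=\epsilon(S')^2=1-\chi_{\Lambda_p}$ says only that $\epsilon(S)$ and $\epsilon(S')$ vanish exactly on $\Lambda_p$ and take values $\pm 1$ elsewhere; since $S$ and $S'$ do not commute, Eq.~(\ref{eq: KSNC}) imposes no further relation between them, so their sign patterns on $\Lambda\setminus\Lambda_p$ are a priori independent and cannot resolve the residual $\zz_2$ ambiguity in matching the two non-shared rays across an edge of the intersection graph. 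Whatever forces path-independence must be a genuinely global consequence of realising all the partitions inside a single $\Lambda$, and this is precisely where the paper's proof takes a different, more analytic route: instead of colouring rays it constructs (partially defined) invertible measurable maps $\phi_{S'S}:\Lambda\to\Lambda$ intertwining the level-set partitions of $\epsilon(S)$ and $\epsilon(S')$, associates a connection element $l_{C'C}$ to each such choice, and reads off Eq.~(\ref{eq: CNC}) from the requirement that the composite of these maps around a context cycle preserve $f_0=\epsilon(S_0)$, which is non-degenerate with respect to the partition $\{\Lambda^{S_0}_s\}_{s=0,\pm1}$. To complete your route you would need an argument operating at that level; the local algebraic relation you invoke is insufficient.
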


\begin{proof}
    $\epsilon$ maps the projections $p\in\PS\subset\PH$ (that arise in spectral resolutions of the spin-$1$ observables in $\cS$) to measurable subsets $\Lambda_p\subset\Lambda$ such that (up to negligible sets) $\Lambda_0=\emptyset$, $\Lambda_\mathbbm{1}=\Lambda$ and $\Lambda_{p+p'}=\Lambda_p \cupdot \Lambda_{p'}$ whenever $pp'=0$. For every spin-$1$ observable $S\in\cS$ with spectral decomposition $S=\sum_{s=0,\pm 1} sp_s$, $p_s\in\cP_1(\cS)$, $\epsilon$ therefore induces a partition $\Lambda = \bigcupdot_{s=0,\pm 1} \Lambda^S_s$ into measurable subsets $\Lambda^S_s$, which yield the outcome $s\in \spec(S)$ under evaluation of $f_S:=\epsilon(S):\Lambda\ra\R$, where $\im(f_S)=\spec(S)=\{-1,0,+1\}$.\footnote{Formally, $\epsilon|_\PS:\PS\ra\cB(\Lambda)$ defines an orthomorphism into the Boolean $\sigma$-algebra $\cB(\Lambda)$.}
    
    Consider two spin-$1$ observables $S,S'\in\cS$, and let $\phi_{S'S}:\Lambda\ra\Lambda$ be a measurable function such that $\phi_{S'S}(\Lambda^S_s)\subset\Lambda^{S'}_{s'=s}$ for all $s\in\{-1,0,+1\}$. Clearly, this definition depends on the choice of generating spin-$1$ observables for the contexts $C=C(S),C'=C(S')$. In other words, it depends on a choice of bijection $l_{C'C}: \mc{P}_1(C) \ra \mc{P}_1(C')$. We may thus write $\phi_{S'S} = \phi^{l_{C'C}}_{C'C}$, and $\phi^\fl=(\phi^{l_{C'C}}_{C'C})_{C,C'\in\CSm}$ for a collection depending on the context connection $\fl=(l_{C'C})_{C,C'\in\CSm}$. We will derive constraints on context connections $\fl$ by concatenating the maps in $\phi^\fl$ along context cycles $(C_0,\cdots,C_{n-1})$, in which case we will abbreviate the notation to $\phi^{l_{(i+1)i}}_{(i+1)i}:=\phi^{l_{C_{(i+1)}C_i}}_{C_{(i+1)}C_i}=\phi_{S_{(i+1)}S_i}$ for $C_i=C(S_i)$ and $i\in\zz_n$.
    
    To this end, note first that $\phi_{S_{(i+1)}S_i}$ can generally not be chosen to be one-to-one, since $\Lambda^{S_i}_{s_i=s}$ and $\Lambda^{S_{(i+1)}}_{s_{i+1}=s}$ will generally have different cardinalities. Nevertheless, by restricting to measurable subsets $\tilde{\Lambda}^{S_i}_{s_i} \subset \Lambda^{S_i}_{s_i}$ such that there exist injections $\tilde{\Lambda}^{S_i}_{s_i} \hookrightarrow \Lambda^{S_j}_{s_j=s_i}$ for all $s_i\in\{-1,0,+1\}$ and $i,j\in\zz_n$, we can find invertible (measurable) maps $\phi_{S_{(i+1)}S_i}: \Lambda\ra\Lambda$ such that $\phi_{S_{(i+1)}S_i}(\tilde{\Lambda}^{S_i}_{s_i}) = \tilde{\Lambda}^{S_{(i+1)}}_{s_{(i+1)}=s_i}$ for all $s_i\in\{-1,0,+1\}$ and such that $\phi_{S_{(i+1)}S_i}|_{\Lambda/\tilde{\Lambda}^{S_i}} = \mathrm{id}$ for all $i\in\zz_n$, where $\tilde{\Lambda}^{S_i} = \bigcupdot_{s_i=0,\pm 1} \tilde{\Lambda}^{S_i}_{s_i}$.\footnote{Indeed, we may choose $\tilde{\Lambda}^S_s\subset\Lambda^S_s$ to be of cardinality $\min_{S'\in\cS} |\Lambda^{S'}_{s'=s}|$ for all $s\in\{-1,0,+1\}$ and $S\in\cS$.}
    
    Denote by $f_i=\epsilon(S_i)$ the measurable function on $\Lambda$ representing the spin-$1$ observable $S_i$ under the classical embedding $\epsilon$. Then $f_{(i+1)}|_{\tilde{\Lambda}^{S_{(i+1)}}} = f_i|_{\tilde{\Lambda}^{S_i}} \circ  (\phi^{l_{(i+1)i}}_{(i+1)i})^{-1}$, hence,\footnote{What is more, since $\phi^{l_{(i+1)i}}_{(i+1)i}|_{\Lambda/\tilde{\Lambda}^{S_i}}=\mathrm{id}$ we also have $f_0\circ\left(\circ_{i=0}^{n-1} \phi^{l_{(i+1)i}}_{(i+1)i}\right)^{-1} = f_0$.}
    \begin{equation*}
        f_0|_{\tilde{\Lambda}^{S_0}} \circ \left(\circ_{i=0}^{n-1} \phi^{l_{(i+1)i}}_{(i+1)i}\right)^{-1} = f_0|_{\tilde{\Lambda}^{S_0}}
        \ \ \Llra \ \ \circ_{i=0}^{n-1} \phi^{l_{(i+1)i}}_{(i+1)i} = \id
        \ \ \Llra \ \ \circ_{i=0}^{n-1} l_{(i+1)i} = \id\; ,
    \end{equation*}
    where we used that $f_0 = \epsilon(S_0)$ is non-degenerate with respect to the $\{\Lambda^{S_0}_s\}_{s=0,\pm 1}$-partition ($S_0$ is non-degenerate and $\im(f_0)=\spec(S_0)=\{-1,0,+1\})$. Since we constructed $\phi^\fl$ simply by assuming the existence of a classical embedding $\epsilon$, the result follows.
\end{proof}

In Ref.~\cite{Frembs2024b} we further generalise Thm.~\ref{thm: CNC} beyond the quantum case to include partial algebras \cite{KochenSpecker1967}, thus showing that the triviality constraints in Eq.~(\ref{eq: CNC}) encode Kochen-Specker noncontextuality completely. This will also enable us to provide a more detailed comparison with the marginal approach to contextuality, specifically to prove that Thm.~\ref{thm: CNC} fully reconciles Kochen-Specker contextuality with the notion of state-independent contextuality (SI-C) underlying SI-C sets in the marginal, and specifically in the graph-theoretic approach. In the present paper, we will make this point by showing explicitly how it rules out SI-C sets 
such as in Ref.~\cite{YuOh2012} (Cor.~\ref{cor: Yu-Oh for contexts} below).

\subsection{From KS-colourings to three-colourings}\label{sec: resolving the discrepancy}

In Sec.~\ref{sec: BKS theorem revisited}, we argued that a Kochen-Specker (KS) colouring can be understood in terms of a labelling of the squared spin-$1$ observables by a spatial direction in $\R^3$, e.g. denoting one observable in every triple of squared spin-1 observables by `$S^2_z$'. Thm.~\ref{thm: CNC} justifies this view, in fact, it demonstrates that \emph{KS noncontextuality (Eq.~(\ref{eq: KSNC})) can be understood as the possibility to label all squared spin-1 observables consistently in this way}.

\begin{theorem}\label{thm: KSNC = 3-colouring}
    Let $\cS\subset\LHsa$ be a subset of spin-$1$ observables in $\dim(\cH)=3$, and denote by $\CSpin\subset\CH$ the partial order of contexts generated by $\cS$. If $\cS$ admits a (classical) embedding $\epsilon:\cS\ra L_\infty(\Lambda)$ satisfying Eq.~(\ref{eq: KSNC}), then every squared spin-$1$ observable in $\cS$ can be labelled by either `$S^2_x$', `$S^2_y$' or `$S^2_z$' such that in every commuting triple each label appears exactly once.
\end{theorem}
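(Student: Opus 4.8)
The plan is to feed Theorem~\ref{thm: CNC} into a ``parallel transport'' argument: a context connection $\fl$ whose holonomy around every cycle is trivial behaves like a flat connection, and such a connection should admit a global flat section, which is precisely the labelling we want. First I would set up the dictionary specific to spin-$1$. Each maximal context $C=C(S)\in\CSm$ carries three orthogonal rank-$1$ eigenprojections $\cP_1(C)=\{p_{-1},p_0,p_1\}$, and the three squared spin-$1$ observables it contains are $\one-p_{-1},\one-p_0,\one-p_1$, each determined uniquely by its rank-$1$ zero-eigenspace. Hence a commuting triple of squared observables is the same datum as a maximal context, and labelling its members by `$S^2_x$', `$S^2_y$', `$S^2_z$' (each once) is the same as giving a bijection $\ell_C:\cP_1(C)\ra\{x,y,z\}$. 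So the theorem reduces to producing a single global map $\ell:\bigcup_{C\in\CSm}\cP_1(C)\ra\{x,y,z\}$ restricting to a bijection on every maximal context, the point being that $\ell$ must assign a rank-$1$ projection the same label in every context containing it.

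To construct $\ell$, I would first invoke Theorem~\ref{thm: CNC} to obtain a context connection $\fl=(l_{C'C})_{C,C'\in\CSm}$ satisfying the triviality constraints~(\ref{eq: CNC}). Then I would fix an arbitrary reference context $C_0\in\CSm$ together with an arbitrary bijection $\ell_0:\cP_1(C_0)\ra\{x,y,z\}$, and transport it everywhere by setting $\ell_C:=\ell_0\circ l_{C_0C}$ for each $C\in\CSm$. Since both $l_{C_0C}:\cP_1(C)\ra\cP_1(C_0)$ and $\ell_0$ are bijections, each $\ell_C$ is automatically a bijection onto $\{x,y,z\}$; thus every commuting triple already receives all three labels exactly once, and only consistency on overlaps remains.

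The hard part is well-definedness: if a rank-$1$ projection $p$ lies in two contexts $C,C'$, I must show $\ell_C(p)=\ell_{C'}(p)$. Here $p\in\cP_1(C)\cap\cP_1(C')$ forces $p\in\cP_1(C\cap C')$ (for spin-$1$ two such contexts meet precisely in the subcontext $C(\one-p)$ generated by the squared observable with zero-eigenspace $p$), so the consistency condition~(\ref{eq: noncontextual context connection}) yields $l_{C'C}(p)=p$. Applying the triviality constraint~(\ref{eq: CNC}) to the $3$-cycle $\gamma=(C_0,C,C')$ gives $l_{C_0C'}\circ l_{C'C}\circ l_{CC_0}=\id$, i.e.\ $l_{C_0C}=l_{C_0C'}\circ l_{C'C}$; evaluating at $p$ and using $l_{C'C}(p)=p$ yields $l_{C_0C}(p)=l_{C_0C'}(p)$, hence $\ell_C(p)=\ell_{C'}(p)$. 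Thus $\ell$ is well defined, and relabelling each squared observable $\one-p$ by $\ell(p)$ produces the desired consistent assignment of `$S^2_x$', `$S^2_y$', `$S^2_z$'.

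I expect the genuine obstacle to sit entirely in this overlap-consistency step: the triviality constraints of Theorem~\ref{thm: CNC} are a statement about holonomy around cycles, and the work is to convert that into path-independence of the transported labels on shared projections, which is exactly what the combination of the fixing condition~(\ref{eq: noncontextual context connection}) (making $l_{C'C}$ act trivially on the shared ray) and the $3$-cycle identity achieves. The remaining ingredients --- the dictionary between squared observables, zero-eigenspaces and maximal contexts, and the bijectivity of each $\ell_C$ --- are routine bookkeeping, as is the handling of degenerate cycles where $C$, $C'$, or $C_0$ coincide (there the claim reduces directly to $l_{CC}=\id$ and the fixing condition).
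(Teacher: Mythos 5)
Your proposal is correct and follows essentially the same route as the paper: obtain a connection with trivial holonomy from Thm.~\ref{thm: CNC}, fix a reference context and a labelling there, transport it everywhere via $\fl$, and use the triviality constraints together with the fixing condition~(\ref{eq: noncontextual context connection}) to get independence of the reference/path. Your explicit $3$-cycle argument for well-definedness on shared projections is just a spelled-out version of the paper's remark that the labelling ``does not depend on the reference observable''.
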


\begin{proof}
    Fix a maximal context $C\in\CSm$ and a generating spin-$1$ observable $S$ with $C=C(S)$ and spectral decomposition $S=\sum_{s=0,\pm 1} sp_s$. Without loss of generality we identify its eigenspaces with spatial directions via $p_{-1}\leftrightarrow x$, $p_1\leftrightarrow y$ and $p_{0}\leftrightarrow z$. Since any context connection $\fl=(l_{C'C})_{C,C'\in\CSm}$ relates between the generating projections in maximal contexts, $\fl$ also defines a unique generating spin-$1$ observable in every other maximal context $C\neq C'=C(S')\in\CSm$ by $S'=l_{C(S'),C(S)}(S)=\sum_{s=0,\pm 1} sl_{C(S'),C(S)}(p_s)$. Moreover, let $S_\pi=\sum_{s=0,\pm 1} \pi(s)p_s$ denote the spin-$1$ observables obtained from $S$ by a permutation of its spectrum. Then $C(S)=C(S_\pi)$, and $\fl$ further defines unique elements $S'_\pi = l_{C(S'),C(S)}(S_\pi) = \sum_{s=0,\pm 1} \pi(s)l_{C(S'),C(S)}(p_s)$. We may thus label the eigenspaces of every other spin-$1$ observable in terms of those of $S$, the action of the symmetric group $\Sym(S)$ acting on its spectrum, and the context connection $\fl$.
    
    By Thm.~\ref{thm: CNC}, there exists a context connection $\fl=(l_{C'C})_{C,C'\in\CSm}$ satisfying the constraints in Eq.~(\ref{eq: CNC}). This implies that our labelling above does not depend on the reference observable $S$, that is, we also have $S''_\pi=l_{C(S''),C(S')}(S'_\pi)$ for all contexts $C\neq C',C''\in\CSm$ and generating observables defined as above. Finally, since every non-maximal context $C_\mathbbm{1}\neq C\in\CSpin$ is generated by a unique squared spin-$1$ observable, and since $\fl$ preserves subcontexts by definition, this yields the desired labelling of all squared spin-$1$ observables $S^2$ in terms of the eigenspaces of spin-$1$ observables $S\in\cS$.
\end{proof}

Notably, by identifying every squared spin-$1$ observable with the rank-$1$ projection onto its $0$-eigenspace, the labelling of squared spin-$1$ observables in Thm.~\ref{thm: KSNC = 3-colouring} can be cast as a colouring problem of these projections. This should be contrasted with the colouring problem defined for valuations in Sec.~\ref{sec: BKS theorem}: the latter defines a two-colouring, while the former defines a three-colouring.\footnote{In Ref.~\cite{Frembs2024b}, we prove that Kochen-Specker noncontextuality is equivalent to a colouring problem of the orthogonality graph of its minimal projections more generally, and use this to re-derive and generalise similar results obtained in the graph-theoretic approach to contextuality \cite{RamanathanHorodecki2014,CabelloKleinmannBudroni2015}.} This shows that the constraints in Eq.~(\ref{eq: KSNC}) are strictly stronger than the constraints on valuations in Lm.~(\ref{lm: BKS lemma}).

Thm.~\ref{thm: CNC} characterises KS noncontextuality in terms of constraints on context connections on the partial order of contexts. This is in contrast to the correlation approach in terms of noncontextuality inequalities \cite{BudroniEtAl2022}. Both capture different notions of contextuality (see Ref.~\cite{Frembs2024b} for a comparison). As a corollary to Thm.~\ref{thm: KSNC = 3-colouring}, we now show that the SI-C set in Ref.~\cite{YuOh2012} yields a proof of the KS theorem - not in terms of the violation of a noncontextuality inequality it induces, but as a direct consequence of Eq.~(\ref{eq: KSNC}).

\begin{corollary}\label{cor: Yu-Oh for contexts}
    The spin-$1$ observables $\cS_\mathrm{YO}\subset\LHsa$ with $\dim(\cH)=3$ do not admit a classical embedding $\epsilon:\cS_\mathrm{YO}\ra L_\infty(\Lambda)$ for any measurable space $\Lambda$.
\end{corollary}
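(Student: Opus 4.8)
The plan is to prove the contrapositive via Theorem~\ref{thm: KSNC = 3-colouring}: were $\cS_\mathrm{YO}$ to admit a classical embedding $\epsilon$ satisfying Eq.~(\ref{eq: KSNC}), the squared spin-$1$ observables generated by $\cS_\mathrm{YO}$ would carry a consistent three-colouring by labels $\{x,y,z\}$ that is rainbow (each label exactly once) on every maximal context of $\CSpin$. I would show that the orthogonality structure of the Yu-Oh rays admits no such three-colouring. First I would fix the $13$ rays explicitly---the three coordinate directions $z_i$, the four tetrahedral rays $y_\alpha$, and the six face rays $h_k$---and record all orthogonalities. These yield exactly four ``internal'' orthogonal triples fully inside $\cS_\mathrm{YO}$, namely $\{z_1,z_2,z_3\}$ together with one triple $\{z_i,h_k,h_l\}$ per coordinate axis, whereas each $y_\alpha$ is orthogonal to exactly three of the $h_k$ but lies in no triple contained in $\cS_\mathrm{YO}$.

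The crucial structural observation is that $\CSpin$ contains more maximal contexts than these four internal triples. Whenever $u,v\in\cS_\mathrm{YO}$ are orthogonal, $S_u^2$ and $S_v^2$ commute and generate a maximal context whose third generating projection is onto $w=u\times v$, regardless of whether $w\in\cS_\mathrm{YO}$. Thus each pair $y_\alpha\perp h_k$ spans a maximal context $\{S_{y_\alpha}^2,S_{h_k}^2,S_w^2\}$, and rainbow-ness of this context forces the labels of $S_{y_\alpha}^2$ and $S_{h_k}^2$ to differ; the external projection $S_w^2$ merely absorbs the remaining label and, lying in this single context only, imposes no further constraint (any accidental coincidence among these external rays could only add constraints, hence only strengthen the obstruction). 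In this way the existence of a three-colouring reduces to a purely combinatorial question on the $13$ rays: a labelling with the four internal triples rainbow and with distinct labels on every orthogonal pair $y_\alpha\perp h_k$.

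Finally I would derive the contradiction by a short finite case analysis. Normalising the three $z_i$ to distinct labels fixes, for each coordinate axis, the unordered pair of labels carried by the two associated $h_k$; the remaining freedom is the ordering within each of the three pairs, i.e. $2^3=8$ cases. In each case one checks the four constraints coming from the $y_\alpha$: each $y_\alpha$ must receive a label avoiding the labels of its three orthogonal $h_k$, which is possible only if those three $h_k$ do not already exhaust $\{x,y,z\}$. A direct enumeration shows that in every one of the eight orderings at least one $y_\alpha$ has its three $h$-neighbours coloured with all three labels, leaving it uncolourable---so no three-colouring exists and the embedding cannot exist. I expect the main obstacle to be conceptual rather than computational: the realisation (second paragraph) that the $y$-observables, though contained in no orthogonal triple within the $13$ rays, are nonetheless constrained through the external contexts they generate. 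This is precisely what upgrades the Yu-Oh set from KS-colourable (a two-colouring) to three-colouring-obstructed, and it is the point where the present framework departs from the valuation-based analysis of Lm.~\ref{lm: BKS lemma}.
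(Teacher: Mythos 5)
Your proposal is correct and follows essentially the same route as the paper: invoke Thm.~\ref{thm: KSNC = 3-colouring} and show that the Yu-Oh rays, once the orthogonal pairs are completed by their third triad members (your ``external'' projections are exactly the paper's 12 additional vectors extending the 13 rays to 25), admit no three-colouring. The only difference is presentational --- you establish non-three-colourability by an explicit $2^3$ enumeration over the orderings of the face-ray label pairs (which I checked does fail in all eight cases), whereas the paper uses a symmetry-based table argument; both are valid.
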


\begin{proof}
    The 13 rays in Ref.~\cite{YuOh2012} define a set of rank-$1$ projections, which (together with the uniquely determined, complementary rank-$1$ projection for every pair of orthogonal projections) defines a subset of spin-$1$ observables $\cS_\mathrm{YO}\subset\LHsa$. $\cS_\mathrm{YO}$ does not admit a three-colouring \cite{Cabello2012b} (see App.~\ref{app: Yu-Oh} for an explicit proof). It thus follows from Thm.~\ref{thm: KSNC = 3-colouring} that $\cS_\mathrm{YO}$ does not admit a classical embedding, hence, is Kochen-Specker contextual.
\end{proof}

Thm.~\ref{thm: KSNC = 3-colouring} thus reconciles Kochen-Specker noncontextuality, defined in terms of the algebraic constraints on classical embeddings in Eq.~(\ref{eq: KSNC}), with modern proofs of the Kochen-Specker theorem (Thm.~\ref{thm: KS theorem}) in the form of (state-independent) violations of noncontextuality inequalities.

\section{Discussion}\label{sec: discussion}

We presented a reformulation of the (``unwieldy") algebraic constraints in Eq.~(\ref{eq: KSNC}) on classical embeddings of quantum systems, as defined in the seminal work by Kochen and Specker \cite{KochenSpecker1967}. To this end, we have introduced a conceptually and mathematically new tool, called a \emph{context connection}, which allowed us to express Kochen-Specker noncontextuality in terms of ``triviality constraints" along \emph{context cycles}. What is more, this tool offers fresh perspectives on a variety of related research directions.

First, for what regards the Kochen-Specker theorem, we have demonstrated how Thm.~\ref{thm: CNC} fills the gap between traditional proofs, based on the nonexistence of valuations for Kochen-Specker (KS) sets, and proofs based on state-independent contextuality (SI-C) sets \cite{YuOh2012,BengtssonBlanchfieldCabello2012,LeiferDuarte2020}, which arose out of the study of experimental tests of contextuality \cite{BartosikEtAl2009,KirchmairEtAl2009,AmselemEtAl2009,GuehneEtAl2010,MoussaEtAl2010,LapkiewiczEtAl2011}. Thm.~\ref{thm: CNC} thus reconciles the original algebraic approach, with the modern marginal approach to contextuality \cite{BudroniEtAl2022}. For an in-depth comparison between these two approaches, we refer to the companion paper \cite{Frembs2024b} to this work.

Second, the new concept of a context connection grounds Kochen-Specker contextuality on a solid mathematical footing, which previously had been lacking \cite{BudroniEtAl2022,JokinenEtAl2024}. Our framework thus opens the possibility towards unification with various other approaches to contextuality in the literature \cite{Spekkens2005,AbramskyBrandenburger2011,CabelloSeveriniWinter2014,DzhafarovKujalaCervantes2016,Raussendorf2016,CleveLiuSlofstra2017,OkayRoberts2017,FrembsRobertsBartlett2018,OkayRaussendorf2020}. In particular, we expect the tools developed here to significantly improve ongoing efforts to quantitatively analyse contextuality as a resource in quantum computation \cite{Raussendorf2013,HowardEtAl2014,BravyiGossetKoenig2018,FrembsRobertsCampbellBartlett2023}.

Finally, our reformulation of Eq.~(\ref{eq: KSNC}) in Thm.~\ref{thm: CNC} highlights the intrinsically geometric nature of Kochen-Specker contextuality: Eq.~(\ref{eq: CNC}) may be read as the condition that the ``space of contexts" $\CSpin$ has trivial holonomy (in a suitably generalised sense). Exploring this idea further may lead to a deeper understanding of contextuality as a ``geometric obstruction'' \cite{Specker1960}, and towards a geometric reformulation of quantum theory.

\bibliographystyle{abbrv}
\bibliography{bibliography}

\begin{thebibliography}{10}

\bibitem{AbramskyBrandenburger2011}
S.~Abramsky and A.~Brandenburger.
\newblock The sheaf-theoretic structure of non-locality and contextuality.
\newblock {\em New J. Phys.}, 13(11):113036, 2011.

\bibitem{AmselemEtAl2009}
E.~Amselem, M.~R\aa{}dmark, M.~Bourennane, and A.~Cabello.
\newblock State-independent quantum contextuality with single photons.
\newblock {\em Phys. Rev. Lett.}, 103:160405, Oct 2009.

\bibitem{IshamDoeringI}
{Anderas D{\"o}ring, and Chris J. Isham}.
\newblock A topos foundation for theories of physics: I. formal languages for
  physics.
\newblock {\em J. Math. Phys.}, 49(5):053515, 2008.

\bibitem{IshamDoeringII}
{Andreas D{\"o}ring, and Chris J. Isham}.
\newblock {A topos foundation for theories of physics: II. Daseinisation and
  the liberation of quantum theory}.
\newblock {\em J. Math. Phys.}, 49(5):053516, 2008.

\bibitem{IshamDoeringIII}
{Andreas D{\"o}ring, and Chris J. Isham}.
\newblock {A topos foundation for theories of physics: III. The representation
  of physical quantities with arrows
  $\breve{\delta}(\hat{A}):\underline{\Sigma} \rightarrow
  \underline{\mathbb{R}^\leftrightarrow}$}.
\newblock {\em J. Math. Phys.}, 49(5):053517, 2008.

\bibitem{IshamDoeringIV}
{Andreas D{\"o}ring, and Chris J. Isham}.
\newblock {A topos foundation for theories of physics: IV. Categories of
  systems}.
\newblock {\em J. Math. Phys.}, 49(5):053518, 2008.

\bibitem{Arnold2013}
V.~Arnold, K.~Vogtmann, and A.~Weinstein.
\newblock {\em Mathematical Methods of Classical Mechanics}.
\newblock Graduate Texts in Mathematics. Springer New York, 2013.

\bibitem{BadziagBengtssonCabelloPitowsky2009}
P.~Badziag, I.~Bengtsson, A.~Cabello, and I.~Pitowsky.
\newblock Universality of state-independent violation of correlation
  inequalities for noncontextual theories.
\newblock {\em Phys. Rev. Lett.}, 103:050401, Jul 2009.

\bibitem{BartosikEtAl2009}
H.~Bartosik, J.~Klepp, C.~Schmitzer, S.~Sponar, A.~Cabello, H.~Rauch, and
  Y.~Hasegawa.
\newblock Experimental test of quantum contextuality in neutron interferometry.
\newblock {\em Phys. Rev. Lett.}, 103:040403, Jul 2009.

\bibitem{Bell1964}
J.~S. Bell.
\newblock On the {E}instein-{P}odolsky-{R}osen paradox.
\newblock {\em Physics}, 1:195, Nov 1964.

\bibitem{Bell1966}
J.~S. Bell.
\newblock On the problem of hidden variables in quantum mechanics.
\newblock {\em Rev. Mod. Phys.}, 38:447--452, Jul 1966.

\bibitem{Bell1982}
J.~S. Bell.
\newblock On the impossible pilot wave.
\newblock {\em Found. Phys.}, 12(10):989--999, Oct 1982.

\bibitem{BeltramettiBugajski1995}
E.~G. Beltrametti and S.~Bugajski.
\newblock A classical extension of quantum mechanics.
\newblock {\em J. Phys. A}, 28(12):3329, Jun 1995.

\bibitem{BengtssonBlanchfieldCabello2012}
I.~Bengtsson, K.~Blanchfield, and A.~Cabello.
\newblock {A Kochen-Specker inequality from a SIC}.
\newblock {\em Physics Letters A}, 376(4):374--376, 2012.

\bibitem{Bohm1952}
D.~Bohm.
\newblock A suggested interpretation of the quantum theory in terms of
  ``hidden" variables. i.
\newblock {\em Phys. Rev.}, 85:166--179, Jan 1952.

\bibitem{BravyiGossetKoenig2018}
S.~Bravyi, D.~Gosset, and R.~K{\"o}nig.
\newblock Quantum advantage with shallow circuits.
\newblock {\em Science}, 362(6412):308--311, 2018.

\bibitem{BrunnerEtAl2014}
N.~Brunner, D.~Cavalcanti, S.~Pironio, V.~Scarani, and S.~Wehner.
\newblock Bell nonlocality.
\newblock {\em Rev. Mod. Phys.}, 86:419--478, Apr 2014.

\bibitem{BudroniEtAl2022}
C.~Budroni, A.~Cabello, O.~G\"uhne, M.~Kleinmann, and J.-A. Larsson.
\newblock {Kochen-Specker contextuality}.
\newblock {\em Rev. Mod. Phys.}, 94:045007, Dec 2022.

\bibitem{BuschLahtiPellonpaa_QuantumMeasurement}
P.~Busch, P.~Lahti, J.~Pellonp{\"a}{\"a}, and K.~Ylinen.
\newblock {\em Quantum Measurement}.
\newblock Theoretical and Mathematical Physics. Springer, 2016.

\bibitem{IshamButterfieldII}
J.~Butterfield and C.~J. Isham.
\newblock {A topos perspective on the Kochen-Specker Theorem II. Conceptual
  aspects and classical analogues}.
\newblock {\em Int. J. Theor. Phys.}, 38(3):827--859, 1999.

\bibitem{IshamButterfieldIV}
J.~Butterfield and C.~J. Isham.
\newblock {Topos perspective on the Kochen-Specker theorem: IV. Interval
  valuations}.
\newblock {\em Int. J. Theor. Phys.}, 41(4):613--639, 2002.

\bibitem{Cabello2008b}
A.~Cabello.
\newblock Experimentally testable state-independent quantum contextuality.
\newblock {\em Phys. Rev. Lett.}, 101:210401, Nov 2008.

\bibitem{Cabello2012b}
A.~Cabello.
\newblock State-independent quantum contextuality and maximum nonlocality,
  2012.

\bibitem{CabelloEstebanzGarcia-Alcaine1997}
A.~Cabello, J.~M. Estebaranz, and G.~Garc{\'i}a-Alcaine.
\newblock {Bell-Kochen-Specker theorem: A proof with 18 vectors}.
\newblock {\em Phys. Lett., A}, 212(4):183--187, 1996.

\bibitem{CabelloKleinmannBudroni2015}
A.~Cabello, M.~Kleinmann, and C.~Budroni.
\newblock Necessary and sufficient condition for quantum state-independent
  contextuality.
\newblock {\em Phys. Rev. Lett.}, 114:250402, Jun 2015.

\bibitem{CabelloSeveriniWinter2014}
A.~Cabello, S.~Severini, and A.~Winter.
\newblock Graph-theoretic approach to quantum correlations.
\newblock {\em Phys. Rev. Lett.}, 112:040401, Jan 2014.

\bibitem{ChavesFritz2012}
R.~Chaves and T.~Fritz.
\newblock Entropic approach to local realism and noncontextuality.
\newblock {\em Phys. Rev. A}, 85:032113, Mar 2012.

\bibitem{ChiribellaYuan2014}
G.~Chiribella and X.~Yuan.
\newblock Measurement sharpness cuts nonlocality and contextuality in every
  physical theory, 2014.

\bibitem{HLSBohrification2009}
N.~P.~L. Chris~Heunen and B.~Spitters.
\newblock Bohrification.
\newblock {\em Deep Beauty}, 2009.

\bibitem{CleveLiuSlofstra2017}
R.~Cleve, L.~Liu, and W.~Slofstra.
\newblock Perfect commuting-operator strategies for linear system games.
\newblock {\em J. Math. Phys.}, 58(1):012202, 2017.

\bibitem{ConwayKochen2006}
J.~Conway and S.~Kochen.
\newblock The free will theorem.
\newblock {\em Found. Phys.}, 36(10):1441--1473, Oct 2006.

\bibitem{CrullBacciagaluppi2016}
E.~Crull and G.~Bacciagaluppi, editors.
\newblock {\em Grete Hermann - Between Physics and Philosophy}.
\newblock Springer, 2016.

\bibitem{Dieks2017}
D.~Dieks.
\newblock {Von Neumann's impossibility proof: mathematics in the service of
  rhetorics}.
\newblock {\em Studies in History and Philosophy of Science Part B: Studies in
  History and Philosophy of Modern Physics}, 60:136--148, 2017.
\newblock On the History of the Quantum, HQ4.

\bibitem{Doering2004}
A.~{D{\"o}ring}.
\newblock {Kochen-Specker theorem for von Neumann algebras}.
\newblock {\em Int. J. Theor. Phys.}, 44:139--160, Feb. 2005.

\bibitem{Doering2008}
A.~D\"oring.
\newblock Quantum states and measures on the spectral presheaf.
\newblock {\em Adv. Sci. Lett.}, 2, 10 2008.

\bibitem{DoeringFrembs2019a}
A.~D\"oring and M.~Frembs.
\newblock Contextuality and the fundamental theorems of quantum mechanics.
\newblock {\em J. Math. Phys.}, 63(7):072103, Jul 2022.

\bibitem{DzhafarovKujalaCervantes2016}
E.~N. Dzhafarov, J.~V. Kujala, and V.~H. Cervantes.
\newblock Contextuality-by-default: A brief overview of ideas, concepts, and
  terminology.
\newblock In H.~Atmanspacher, T.~Filk, and E.~Pothos, editors, {\em Quantum
  Interaction}, pages 12--23, Cham, 2016. Springer.

\bibitem{EPR1935}
A.~Einstein, B.~Podolsky, and N.~Rosen.
\newblock Can quantum-mechanical description of physical reality be considered
  complete?
\newblock {\em Phys. Rev.}, 47:777--780, May 1935.

\bibitem{Frembs2022a}
M.~Frembs.
\newblock Bipartite entanglement and the arrow of time.
\newblock {\em Phys. Rev. A}, 107:022218, Feb 2023.

\bibitem{Frembs2024b}
M.~Frembs, 2024.
\newblock (forthcoming).

\bibitem{FrembsDoering2022a}
M.~Frembs and A.~D\"oring.
\newblock Characterization of nonsignaling bipartite correlations corresponding
  to quantum states.
\newblock {\em Phys. Rev. A}, 106:062420, Dec 2022.

\bibitem{FrembsDoering2023}
M.~Frembs and A.~D\"oring.
\newblock Gleason’s theorem for composite systems.
\newblock {\em J. Phys. A}, 56(44):445303, Oct 2023.

\bibitem{FrembsRobertsBartlett2018}
M.~Frembs, S.~Roberts, and S.~D. Bartlett.
\newblock Contextuality as a resource for measurement-based quantum computation
  beyond qubits.
\newblock {\em New J. Phys.}, 20(10):103011, Oct 2018.

\bibitem{FrembsRobertsCampbellBartlett2023}
M.~Frembs, S.~Roberts, E.~T. Campbell, and S.~D. Bartlett.
\newblock Hierarchies of resources for measurement-based quantum computation.
\newblock {\em New J. Phys.}, 25(1):013002, Jan 2023.

\bibitem{Gleason1975}
A.~M. Gleason.
\newblock {\em {Measures on the closed subspaces of a Hilbert space}}, pages
  123--133.
\newblock Springer, 1975.

\bibitem{Groenewold1946}
H.~J. Groenewold.
\newblock On the principles of elementary quantum mechanics.
\newblock {\em Physica}, 12(7):405--460, 1946.

\bibitem{GuehneEtAl2010}
O.~G\"uhne, M.~Kleinmann, A.~Cabello, J.-A. Larsson, G.~Kirchmair,
  F.~Z\"ahringer, R.~Gerritsma, and C.~F. Roos.
\newblock Compatibility and noncontextuality for sequential measurements.
\newblock {\em Phys. Rev. A}, 81:022121, Feb 2010.

\bibitem{IshamButterfieldIII}
J.~Hamilton, C.~J. Isham, and J.~Butterfield.
\newblock {Topos perspective on the Kochen-Specker theorem: III. Von Neumann
  algebras as the base category}.
\newblock {\em Int. J. Theor. Phys.}, 39(6):1413--1436, 2000.

\bibitem{HeinosaariMiyaderaZiman2016}
T.~Heinosaari, T.~Miyadera, and M.~Ziman.
\newblock An invitation to quantum incompatibility.
\newblock {\em J. Phys. A}, 49(12):123001, Feb 2016.

\bibitem{Heisenberg1927}
W.~Heisenberg.
\newblock {{\"U}ber den anschaulichen Inhalt der quantentheoretischen Kinematik
  und Mechanik}.
\newblock {\em Zeitschrift f{\"u}r Physik}, 43(3):172--198, Mar 1927.

\bibitem{stanford}
C.~Held.
\newblock {The Kochen-Specker theorem}.
\newblock In E.~N. Zalta and U.~Nodelman, editors, {\em The {Stanford}
  Encyclopedia of Philosophy}. Metaphysics Research Lab, Stanford University,
  {F}all 2022 edition, 2022.

\bibitem{HLS2009}
C.~Heunen, N.~P. Landsman, and B.~Spitters.
\newblock A topos for algebraic quantum theory.
\newblock {\em Commun. Math. Phys.}, 291(1):63--110, 2009.

\bibitem{HLS2010}
C.~Heunen, N.~P. Landsman, and B.~Spitters.
\newblock Bohrification of operator algebras and quantum logic.
\newblock {\em Synthese}, 186(3):719--752, 2012.

\bibitem{HowardEtAl2014}
M.~Howard, J.~Wallman, V.~Veitch, and J.~Emerson.
\newblock Contextuality supplies the `magic' for quantum computation.
\newblock {\em Nature}, 510(7505):351—355, June 2014.

\bibitem{IshamButterfieldI}
C.~J. Isham and J.~Butterfield.
\newblock {Topos perspective on the Kochen-Specker theorem: I. Quantum states
  as generalized valuations}.
\newblock {\em Int. J. Theor. Phys.}, 37(11):2669--2733, 1998.

\bibitem{JokinenEtAl2024}
P.~Jokinen, M.~Weilenmann, M.~Plávala, J.-P. Pellonpää, J.~Kiukas, and
  R.~Uola.
\newblock No-broadcasting characterizes operational contextuality, 2024.

\bibitem{Kernaghan1994}
M.~Kernaghan.
\newblock {Bell-Kochen-Specker theorem for 20 vectors}.
\newblock {\em J. Phys. A}, 27(21):L829, Nov 1994.

\bibitem{KernaghanPeres1995}
M.~Kernaghan and A.~Peres.
\newblock Kochen-specker theorem for eight-dimensional space.
\newblock {\em Phys. Lett., A}, 198(1):1--5, 1995.

\bibitem{KirchmairEtAl2009}
G.~Kirchmair, F.~Z{\"a}hringer, R.~Gerritsma, M.~Kleinmann, O.~G{\"u}hne,
  A.~Cabello, R.~Blatt, and C.~F. Roos.
\newblock State-independent experimental test of quantum contextuality.
\newblock {\em Nature}, 460(7254):494--497, Jul 2009.

\bibitem{KleinmannEtAl2012}
M.~Kleinmann, C.~Budroni, J.-A. Larsson, O.~G\"uhne, and A.~Cabello.
\newblock Optimal inequalities for state-independent contextuality.
\newblock {\em Phys. Rev. Lett.}, 109:250402, Dec 2012.

\bibitem{KlyachkoEtAl2008}
A.~A. Klyachko, M.~A. Can, S.~Binicio\v{g}lu, and A.~S. Shumovsky.
\newblock Simple test for hidden variables in spin-1 systems.
\newblock {\em Phys. Rev. Lett.}, 101:020403, Jul 2008.

\bibitem{KochenSpecker1967}
S.~Kochen and E.~P. Specker.
\newblock The problem of hidden variables in quantum mechanics.
\newblock {\em J. Math. Mech.}, 17:59--87, 1967.

\bibitem{Kunjwal2014}
R.~Kunjwal.
\newblock {A note on the joint measurability of POVMs and its implications for
  contextuality}, 2014.

\bibitem{LapkiewiczEtAl2011}
R.~Lapkiewicz, P.~Li, C.~Schaeff, N.~K. Langford, S.~Ramelow,
  M.~Wie{\'{s}}niak, and A.~Zeilinger.
\newblock Experimental non-classicality of an indivisible quantum system.
\newblock {\em Nature}, 474(7352):490--493, Jun 2011.

\bibitem{Larsson2002}
J.-A. Larsson.
\newblock {A Kochen-Specker inequality}.
\newblock {\em EPL}, 58(6):799, Jun 2002.

\bibitem{LeiferDuarte2020}
M.~Leifer and C.~Duarte.
\newblock Noncontextuality inequalities from antidistinguishability.
\newblock {\em Phys. Rev. A}, 101:062113, Jun 2020.

\bibitem{Marage1999}
P.~Marage and G.~Wallenborn.
\newblock {\em {The debate between Einstein and Bohr, or how to interpret
  quantum mechanics}}, pages 161--174.
\newblock Birkh{\"a}user, Basel, 1999.

\bibitem{Mermin1990}
N.~D. Mermin.
\newblock Simple unified form for the major no-hidden-variables theorems.
\newblock {\em Phys. Rev. Lett.}, 65:3373--3376, Dec 1990.

\bibitem{MoussaEtAl2010}
O.~Moussa, C.~A. Ryan, D.~G. Cory, and R.~Laflamme.
\newblock Testing contextuality on quantum ensembles with one clean qubit.
\newblock {\em Phys. Rev. Lett.}, 104:160501, Apr 2010.

\bibitem{OkayRaussendorf2020}
C.~Okay and R.~Raussendorf.
\newblock Homotopical approach to quantum contextuality.
\newblock {\em Quantum}, 4:217, Jan 2020.

\bibitem{OkayRoberts2017}
C.~Okay, S.~Roberts, S.~D. Bartlett, and R.~Raussendorf.
\newblock Topological proofs of contextuality in quantum mechanics.
\newblock {\em Quantum Inf. Comput.}, 17(13{\&}14):1135--1166, 2017.

\bibitem{Peres1991}
A.~Peres.
\newblock {Two simple proofs of the Kochen-Specker theorem}.
\newblock {\em J. Phys. A}, 24(4):L175--L178, Feb 1991.

\bibitem{Pitowsky1989}
I.~Pitowsky.
\newblock {\em From George Boole To John Bell --- The Origins of Bell's
  Inequality}, pages 37--49.
\newblock Springer Netherlands, Dordrecht, 1989.

\bibitem{Pitowsky1991}
I.~Pitowsky.
\newblock Correlation polytopes: Their geometry and complexity.
\newblock {\em Math. Program.}, 50(1):395--414, Mar 1991.

\bibitem{RamanathanHorodecki2014}
R.~Ramanathan and P.~Horodecki.
\newblock Necessary and sufficient condition for state-independent contextual
  measurement scenarios.
\newblock {\em Phys. Rev. Lett.}, 112:040404, Jan 2014.

\bibitem{RamanathanEtAl2012}
R.~Ramanathan, A.~Soeda, P.~Kurzy\ifmmode~\acute{n}\else \'{n}\fi{}ski, and
  D.~Kaszlikowski.
\newblock Generalized monogamy of contextual inequalities from the
  no-disturbance principle.
\newblock {\em Phys. Rev. Lett.}, 109:050404, Aug 2012.

\bibitem{Raussendorf2013}
R.~{Raussendorf}.
\newblock {Contextuality in measurement-based quantum computation}.
\newblock {\em Phys. Rev. A}, 88(2):022322, Aug. 2013.

\bibitem{Raussendorf2016}
R.~Raussendorf.
\newblock Cohomological framework for contextual quantum computations.
\newblock {\em ArXiv e-prints}, 2016.

\bibitem{SimonBruknerZeilinger2001}
C.~Simon, {\v{C}}.~Brukner, and A.~Zeilinger.
\newblock Hidden-variable theorems for real experiments.
\newblock {\em Phys. Rev. Lett.}, 86:4427--4430, May 2001.

\bibitem{Specker1960}
E.~Specker.
\newblock {Die Logik nicht gleichzeitig entscheidbarer Aussagen}.
\newblock {\em Dialectica}, 14:239--246, 1960.

\bibitem{Spekkens2005}
R.~W. Spekkens.
\newblock Contextuality for preparations, transformations, and unsharp
  measurements.
\newblock {\em Phys. Rev. A}, 71:052108, May 2005.

\bibitem{vanHove1951b}
L.~van Hove.
\newblock Sur certaines repr\'esentations unitaires d\'un groupe infini de
  transformations.
\newblock {\em Memoires de l\'Academie Royale Belgique}, 6(26):1--102, 1951.

\bibitem{vonNeumann1932}
J.~von Neumann.
\newblock {\em Mathematische Grundlagen der Quantenmechanik}.
\newblock {Die Grundlehren der mathematischen Wissenschaften}. Springer, 1996.

\bibitem{XuCabello2019}
Z.-P. Xu and A.~Cabello.
\newblock Necessary and sufficient condition for contextuality from
  incompatibility.
\newblock {\em Phys. Rev. A}, 99:020103, Feb 2019.

\bibitem{Zhen-PengChenGuehne2020}
Z.-P. Xu, J.-L. Chen, and O.~G\"uhne.
\newblock {Proof of the Peres conjecture for contextuality}.
\newblock {\em Phys. Rev. Lett.}, 124:230401, Jun 2020.

\bibitem{XuChenSu2015}
Z.-P. Xu, J.-L. Chen, and H.-Y. Su.
\newblock State-independent contextuality sets for a qutrit.
\newblock {\em Physics Letters A}, 379(34):1868--1870, 2015.

\bibitem{YuOh2012}
S.~Yu and C.~H. Oh.
\newblock {State-independent proof of Kochen-Specker theorem with 13 rays}.
\newblock {\em Phys. Rev. Lett.}, 108:030402, Jan 2012.

\bibitem{YuGuoTong2015}
X.-D. Yu, Y.-Q. Guo, and D.~M. Tong.
\newblock {A proof of the Kochen-Specker theorem can always be converted to a
  state-independent noncontextuality inequality}.
\newblock {\em New J. Phys.}, 17(9):093001, Sep 2015.

\bibitem{ZimbaPenrose1993}
J.~Zimba and R.~Penrose.
\newblock {On Bell non-locality without probabilities: more curious geometry}.
\newblock {\em Studies in History and Philosophy of Science Part A},
  24(5):697--720, 1993.

\end{thebibliography}

\appendix

\section{Proof of Cor.~\ref{cor: Yu-Oh for contexts}}\label{app: Yu-Oh}

The 13 vectors of the SI-C set in Ref.~\cite{YuOh2012} can be defined by their squared cosines
\begin{equation}\label{eq: 13 vectors}
    0 + 0 + 1 = 0 + \tfrac{1}{2} + \tfrac{1}{2} = \tfrac{1}{3} + \tfrac{1}{3} + \tfrac{1}{3}\; .
\end{equation}
Note that Eq.~(\ref{eq: 13 vectors}) contains pairs of orthogonal vectors, whose uniquely determined vector that completes it to an orthogonal basis (``triad") is not contained. Adding those vectors results in the set of 25 vectors, defined by their squared cosines
\begin{equation}\label{eq: 25 vectors}
    0 + 0 + 1 = 0 + \tfrac{1}{2} + \tfrac{1}{2} = \tfrac{1}{3} + \tfrac{1}{3} + \tfrac{1}{3} = \tfrac{1}{6} + \tfrac{1}{6} + \tfrac{2}{3}\; .
\end{equation}
None of these additional vectors are orthogonal, hence, the set in Eq.~(\ref{eq: 25 vectors}) is closed under forming triads.\footnote{We study ``observable algebras" generated from ``completions'' of SI-C sets in more detail in Ref.~\cite{Frembs2024b}.} We define the set of spin-$1$ observables $\cS_\mathrm{YO}$ as those spin-$1$ observables whose spectral resolution consists of rank-$1$ projections corresponding to these vectors; and $\cS_\mathrm{YO}$ generates the partial order of contexts $\cC(\cS_\mathrm{YO})\subset\CH$ in $\dim(\cH)=3$. It is easy to see that Eq.~(\ref{eq: 25 vectors}) admits a KS colouring, equivalently, it allows to consistently label one squared (spin-1 observable) in every triad of $\cS_\mathrm{YO}$ by `$S_z$'. However, it is impossible to label all elements consistently by `$S_x$',`$S_y$' and `$S_z$', as shown in Tab.~\ref{tab: Yu-Oh in context}.

\begin{table}[htb!]
    \centering
    \begin{tabular}{ccc|l|l|l}
         \toprule
         \multicolumn{3}{c|}{orthogonal triad} & further z-$\perp$ rays 
         & `$S_z$' label since & `$S_y$'/`$S_x$' label since\\
         \midrule
         $001$ & $100$ & $010$ & $110$,$\bar{1}10$ & choice of z axis & \\
         $101$ & $\bar{1}01$ & $\mathit{010}$ & $\bar{1}11$,$11\bar{1}$,$14\bar{1}$,$\bar{1}41$ & choice of x vs -x & $\perp$ to other rays\\
         $011$ & $\mathit{100}$ & $0\bar{1}1$ & $1\bar{1}1$,$41\bar{1}$,$4\bar{1}1$ & choice of y vs -y & $\perp$ to other rays\\
         $1\bar{1}4$ & $\bar{1}11$ & $110$ & & choice of x vs y & $\perp$ to $0\bar{1}1$\\
         $114$ & $\bar{1}10$ & $11\bar{1}$ & & previous z-$\perp$ rays & $\perp$ to $110$\\
         \bottomrule
    \end{tabular}
    \caption{It is impossible to consistently label spin-1 observables in the directions specified by the vectors in Eq.~(\ref{eq: 25 vectors}), using labels `$S_z$',`$S_y$' and `$S_x$', respectively. Notation: $\bar{1} = -1$, triples of integers $abc$ denote vectors $\frac{(a,b,c)^T}{||(a,b,c)||}$, e.g. $41\bar{1}$ denotes the vector $(\frac{2}{\sqrt{3}},\frac{1}{\sqrt{6}},\frac{-1}{\sqrt{6}})^T$. The left column contains orthogonal triads in $\cC(\cS_\mathrm{YO})$, italic font indicates that a vector already appeared in previous rows. The middle columns list additional vectors orthogonal to the respective coordinate axes, as defined in the first row. The right column indicates the reason for labelling the first vector in every triad by `$S_z$', using the symmetry of the arrangement. It is impossible to label all vectors by `$S_z$',`$S_y$' and `$S_x$' consistently since $11\bar{1}$ and $0\bar{1}1$ are both labelled `$S_x$' despite being orthogonal.}
    \label{tab: Yu-Oh in context}
\end{table}

The argument in Tab.~\ref{tab: Yu-Oh in context} follows the same spirit as the argument showing that no KS colouring exists for the 33 vectors in Ref.~\cite{Peres1991}, which are given by their squared cosines.
\begin{equation}\label{eq: 33 vectors}
    0 + 0 + 1 = 0 + \tfrac{1}{2} + \tfrac{1}{2} = 0 + \tfrac{1}{3} + \tfrac{2}{3} = \tfrac{1}{4} + \tfrac{1}{4} + \tfrac{1}{2}\; .
\end{equation}
We note in passing that similar to the set in Eq.~(\ref{eq: 13 vectors}), the set in Eq.~(\ref{eq: 33 vectors}) is not closed under forming triads. Indeed, its closure contains $57$ vectors and $40$ triads.

\end{document}